\newcommand{\E}{\mathbb{E}}
\newcommand{\Pt}{\mathbb{P}^{\theta^*}}
\DeclareMathOperator*{\argmax}{arg\,max}
\newtheorem{hyp}{}
\newtheorem{lem}{Lemma}
\newtheorem{thm}{Theorem}
\begin{document}

\begin{frontmatter}
\title{Modeling rainfalls using a seasonal hidden Markov model}
\runtitle{Modeling rainfalls using a seasonal hidden Markov model}

\begin{aug}
\author{\fnms{Augustin} \snm{Touron}\thanksref{m1,m2}\ead[label=e1]{augustin.touron@math.u-psud.fr}\ead[label=e2]{augustin.touron@edf.fr}},
\runauthor{A. Touron}

\affiliation{Universit\' e Paris-Sud\thanksmark{m1} and EDF R\&D\thanksmark{m2}}

\address{Laboratoire de Math\'ematiques d'Orsay,\\ Univ. Paris-Sud, CNRS, Universit\'e Paris-Saclay\\91405 Orsay, France\\
\printead{e1}\\
\phantom{E-mail:\ }}
\address{EDF R\&D,\\ 6, Quai Watier\\78400 Chatou Cedex France\\
\printead{e2}\\
\phantom{E-mail:\ }}

\begin{keyword}[class=MSC]
\kwd[Primary ]{62P12}
\kwd{60K35}
\kwd[; secondary ]{60K35}
\end{keyword}

\begin{keyword}
\kwd{sample}
\kwd{\LaTeXe}
\end{keyword}

\end{aug}

\begin{abstract}
In order to reach the supply/demand balance, electricity providers need to predict the demand and production of electricity at different time scales. This implies the need of modeling weather variables such as temperature, wind speed, solar radiation and precipitation. This work is dedicated to a new daily rainfall generator at a single site. It is based on a seasonal hidden Markov model with mixtures of exponential distributions as emission laws. The parameters of the exponential distributions include a periodic component in order to account for the seasonal behaviour of rainfall. We show that under mild assumptions, the maximum likelihood estimator is strongly consistent, which is a new result for such models. The model is able to produce arbitrarily long daily rainfall simulations that reproduce closely different features of observed time series, including  seasonality, rainfall occurrence, daily distributions of rainfall, dry and rainy spells. The model was fitted and validated on data from several weather stations across Germany. We show that it is possible to give a physical interpretation to the estimated states.
\end{abstract}



\end{frontmatter}


\section{Introduction}

\subsection{Context and motivation}
Since electricity still cannot be efficiently stored, electricity providers such as EDF have to adjust closely production and demand. There are various ways of producing electricity: nuclear power, coal, gas, hydro-electricity, wind power, solar power, biomass... One of the consequences of the rise of renewable energies is that the electric power industry gets more and more weather-dependent. Weather variables such as temperature, precipitation, solar radiation or wind speed have a growing impact on both production and demand. For example, if the temperature drops by $1^\circ$C in France in winter, the need in power increases by $1500$ MW, which corresponds roughly to one nuclear reactor, or hundreds of wind turbines. The electric power produced by wind turbines, photovoltaic cells and hydroelectric plants also depends directly on weather conditions. The scale of weather information  needed by power industries  is also evolving from the national scale in a centralized system to the more and more local scale with the ongoing decentralization. Therefore, the weather conditions also need to be taken into account at a local scale. In order to achieve this, electricity providers try to evaluate the impact of weather variables on both production and consumption. One of the tools they use for this purpose is weather generators.

\subsection{Weather generators}

A \emph{stochastic weather generator} \citep{katz1996} is a statistical model used whenever we need to quickly produce synthetic time series of weather variables. For early examples of weather generators, see \cite{richardson81} or \citep{katz1977}. These series can then be used as input for physical models (e.g. electricity consumption models), to study climate change, to investigate on extreme values \citep{yaoming2004}...  A good weather generator produces times series that can be considered \emph{realistic}. By realistic we mean that they mimic the behaviour of the variables they are supposed to simulate, according to various criteria. For example, a temperature generator may need to reproduce daily mean temperatures, the seasonality of the variability of the temperature, its global distribution, the distribution of the extreme values, its temporal dependence structure... and so on. The criteria that we wish to consider largely depend on applications. There are many types of weather generators. They may differ by the time frequency of the output data: hourly, daily... Some of them 	are only concerned by one location, whereas others are supposed to model the spatial correlations between several sites located in a more or less large area. They can focus on a single or multiple variables.\\

An important class of weather generators is composed of \textit{state space models}, where a discrete variable called the state is introduced. We then model the variable of interest conditionally to each state. In the field of climate modeling, the states are sometimes called \emph{weather types}. See \cite{Ailliot2015} for an overview of such models. In some cases, they can correspond to large-scale atmospheric circulation patterns, thus giving a physical interpretation to the model. The states may or may not be observed. In the former case, the states are defined \emph{a priori} using classification methods, from the local variables that we wish to model, or from large scale atmospheric quantities such as geopotential fields. If the states are not observed, they are called hidden states. Considering hidden states offers a greater flexibility because the determination of the states is data driven instead of being based on arbitrarily chosen exogeneous variables. Moreover, it is possible to interpret the states \emph{a posteriori} by computing the most likely state sequence. For these reasons, state space models are widely used in climate modeling. Apart from mixture models, the simplest example of hidden state space models is hidden Markov models (HMM): the state process is a Markov chain and the observations are independent conditionally to the states.

\subsection{Modeling precipitation}

Unlike other weather variables, the distribution of daily precipitation amounts naturally appears as a mixture of a mass at $0$ corresponding to dry days, and a continuous distribution with support in $\mathbb{R}_+$ corresponding to the intensity of precipitations on rainy days. Thus we can consider two \emph{states}: a dry state and a wet state \citep{wilks98}. Therefore it is very common to model precipitation using state space models. We can also refine the model by adding sub-states to the wet state, for example a \emph{light rain} state versus a \emph{heavy rain} state. When modeling precipitation, one can focus only on the \emph{occurrence process} \citep{Zucchini91} or on both occurrence and amounts of precipitation \citep{ailliot2009}. In both cases, hidden Markov models have been used extensively. In \cite{bellone2000}, the authors simulate precipitation amounts by using a non-homogeneous hidden Markov model in which the underlying transition probabilities depend on large-scale atmospheric variables. In \citep{lambert2003}, a two-states (wet and dry) non-parametric hidden Markov model is used for precipitation amounts.\\

\paragraph{Seasonality}
Using a simple HMM to model precipitation is not possible because the process of precipitation amounts is not stationary, it exhibits a seasonal behaviour with an annual cycle, like most, if not all, weather variables. Figure \ref{saison_mois} shows the mean of monthly precipitation amounts for the station of Bremen in Germany. 

\begin{figure}
\centering
\includegraphics[scale=0.5]{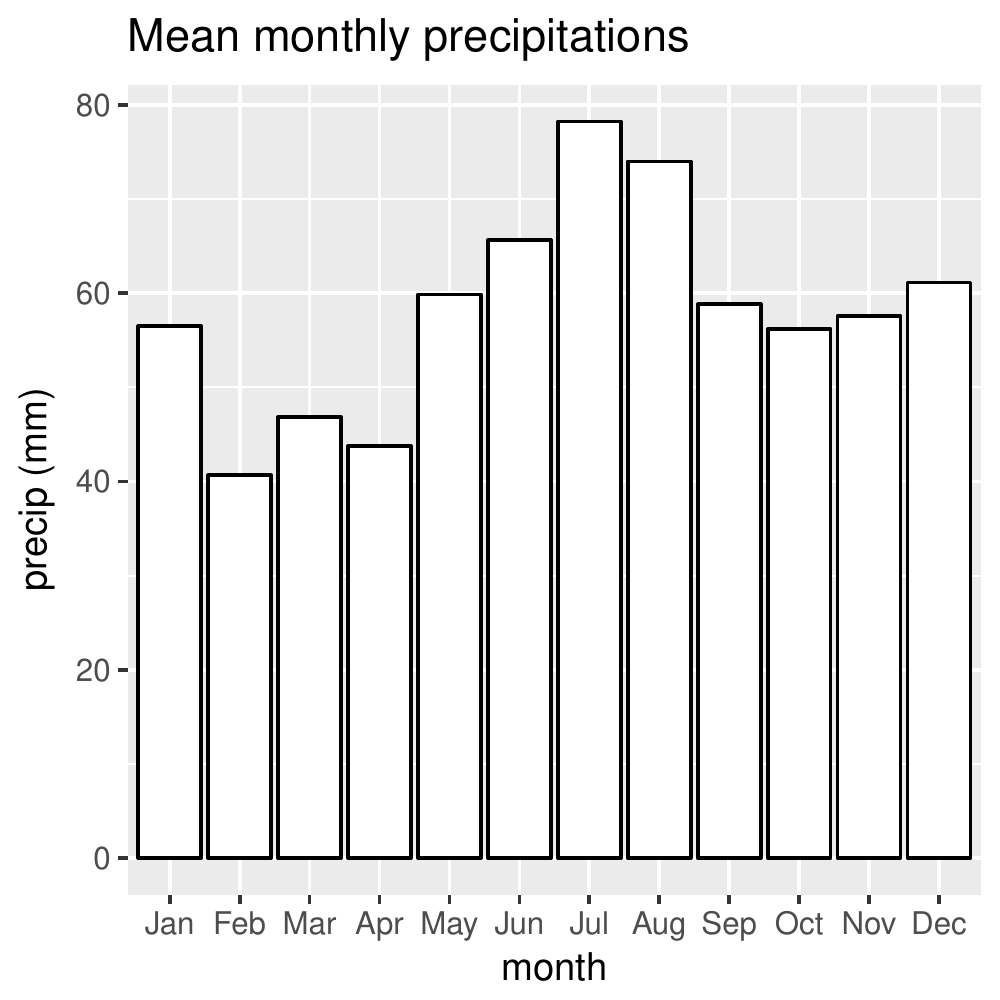}
\caption{Mean precipitation amounts by month}
\label{saison_mois}
\end{figure}

Thus it is necessary to account for this seasonality in our model. There are several ways to do so.
\begin{itemize}
\item In the literature, the most common way to handle seasonality is to split each year into several time periods (e.g. twelve months or four seasons) and to assume that the process to be modelled is stationary within each period. Thus we fit a different model for each time period or we focus on one specific period (e.g. the month of January, or winter). For example in \cite{lennartsson2008}, the authors consider blocks of lengths one, two or three months. See also \cite{Ailliot2015} and references therein. This method is simple to implement as it does not require any further modeling effort, and it may be effective in some cases. However, it has several drawbacks.
\begin{itemize}
\item The stationarity assumption over each period may not be satisfied.
\item We have to fit independently several sub-models, which requires a lot of data.
\item The time series used to fit each of the sub-models is obtained by concatenation of data that do not belong to the same year. For exemple, if the time periods are months, the 31st of January of year $n$ will be followed by the first of January of year $n+1$. This is a problem if we use a Markovian model, which exhibits time dependence.
\item We should be able to simulate a full year using only one model. 
\end{itemize}
\item In time series modeling, another widely spread approach is preprocessing the data in order to obtain a stationary residual. For example, consider $Y_t=s(t)\varepsilon_t$ where $s$ is a deterministic periodic function and $(\varepsilon_t)_t$ a stationary process with unit variance. In this case, we first find some estimator $\hat{s}$ of $s$. Then we model the residual $\frac{Y_t}{\hat{s}(t)}$. This has been adressed in \cite{lambert2003}. Thus the times series appears as a combination of a deterministic part corresponding to seasonality, and a stochastic part. However, it may be difficult to find the right decomposition for $Y_t$.
\item We can avoid splitting the data or preprocessing them by incorporating seasonal parameters in the model. Although it increases the number of parameters and thus the complexity of the estimation problem, this solution offers a greater flexibility by allowing different seasonalities for each state. Therefore, this is the choice we make for the rest of this paper.
\end{itemize}

\subsection{Our contribution}

\paragraph{}In this paper we introduce a state space model for daily precipitation amounts at a single site. Unlike most existing stochastic precipitation generators, our model includes seasonal components, which makes it easy to simulate precipitation time series of arbitrary length. We provide the first theoretical guarantees for the consistency of the maximum likelihood estimator of a seasonal hidden Markov model. The proof relies on the introduction of a suitable stationary hidden Markov model on which we can apply existing consistency results.

\paragraph{}Section 2 deals with our precipitation model from a theoretical point of view. We first give its mathematical formulation, then we state the consistency result (see Theorem \ref{mainthm}) and we prove it. We show that this theorem can be applied to our model. In Section 3 we fit the model to precipitation data. We show that we can easily interpret the estimated states and that the synthetic precipitation time series obtained by simulating according to the model are consistent with observations, which validates the model. The last section is a discussion about how our model is a starting point for further works.

\section{Model}\label{model}

In this section we introduce our model and we study the consistency of the maximum likelihood estimator (MLE).

\subsection{Model description}

\subsubsection{Hidden Markov models}

First, we recall a general definition of finite state-space hidden Markov models. Let $K$ be a positive integer, $\mathsf{X}=\{1,\dots,K\}$ and $(\mathsf{Y},\mathcal{Y})$ a measurable space. A \emph{hidden Markov model} (HMM) with state space $\mathsf{X}$ is a $\mathsf{X}\times\mathsf{Y}$-valued stochastic process $(X_t,Y_t)_{t\geq 1}$ where $(X_t)_{t\geq 1}$ is a Markov chain and $(Y_t)_{t\geq 1}$ are $\mathsf{Y}$-valued random variables that are independant conditonnally on $(X_t)_{t\geq 1}$ and such that for all $j\geq 1$, the conditionnal distribution of $Y_j$ given $(X_t)_{t\geq 1}$ only depends on $X_j$. The law of the Markov chain $(X_t)_{t\geq 1}$ is determined by its initial distribution $\pi$ and its transition matrix $\mathbf{Q}$. For all $k\in\mathsf{X}$, the distribution of $Y_1$ given $X_1=k$ is called the \emph{emission distribution} in state $k$. The Markov chain $(X_t)$ is called the \emph{hidden} Markov chain because it is not accessible to observation. The process $(Y_t)_{t\geq 1}$ only is observed. The two main problems regarding such models are the inference of the parameters based only on the observations, and the determination of a likely state sequence given the parameters. The second problem can be solved using the forward-backward algorithm (see \cite{rabiner86} for an introduction and \cite{cappe2009} for a more modern and general formulation), and the first problem can be solved using the EM algorithm \citep{dempster77}, in which the E step is actually the forward-backward algorithm. In the context of HMM, the EM algorithm is sometimes called the Baum-Welch algorithm \citep{baum1970}.

\subsubsection{Formulation of the model}\label{formulation}

Let $K$, $M$, $d$ be positive integers such that $M\geq 2$. We will now describe the model denoted by $\mathcal{M}(K,M,d)$. Let $\large(X_t^{(1)}\large)_{t\geq 1}$ be a first order homogeneous Markov chain with state space $\mathsf{X} = \{1,\dots,K\}$. Let $\pi$ be its initial distribution and $\mathbf{Q}$ its transition matrix. Let $\mathsf{Y}=[0,+\infty)$ be the observation space, equiped with its Borel $\sigma$-algebra. The observation process $(Y_t)_{t\geq 1}$ is such that the random variables $Y_t$ are independant conditionally on $\large(X_t^{(1)}\large)_{t\geq 1}$ and for all $j\geq 1$, the conditional distribution of $Y_j$ given $\large(X_t^{(1)}\large)_{t\geq 1}$ only depends on $X_j^{(1)}$. The emission distribution $\nu_k$ can be written as
$$\nu_k=p_k\delta_0+(1-p_k)f_k\cdot\boldsymbol{\lambda},$$
where $\delta_0$ refers to the Dirac measure at $0$ and $\boldsymbol{\lambda}$ is the Lebesgue measure on $[0,+\infty)$. The parameter $p_k$ 
is the probability to observe a dry day when in state $k$, and $f_k$ is a probability density function accounting for the intensity of rainfalls. For any positive measurable function $g$ and any positive measure $\mu$, we denote by $g\cdot\mu$ the measure whose density with respect to $\mu$ is $g$. We then have to choose a model for the emission densities $f_k$. Looking at the data, it appears that the distribution of precipitation amounts is very asymetric, with most of the values near $0$ and few large values.\\

Common choices for precipitation modeling are exponential distributions, gamma distributions, or mixtures of those distributions (see e.g. \cite{wilks98} for mixture of exponential distributions and \cite{kenabatho2012} for mixtures of gamma distributions).  When focusing on extreme values, one can also use heavy tail distributions, such as the (generalized) Pareto distribution \citep{lennartsson2008}. In this paper, we will give our consistency results for mixtures of exponential distributions, but they remain valid for various emission distributions, such as mixtures of gammas. For any positive $\lambda$, we denote by $\mathcal{E}(\lambda)$ the exponential distribution with parameter $\lambda$.

To account for seasonality, we shall write the emission distribution in state~$k$ and at time $t$ as

$$\nu_{k,t}:=p_k\delta_0+(1-p_k)\frac{1}{s_k(t)}f_k\left(\frac{\cdot}{s_k(t)}\right)\cdot\boldsymbol{\lambda},$$

\noindent where $s_k$ is a deterministic periodic function acting as a scale parameter. We will assume it is a trigonometric polynomial with degree $d$ and period $T$ (in practice, $T=365$ days):

$$s_k(t) = 1+ \sum_{l=1}^d\left[a_{kl}\cos\left(\frac{2\pi}{T}lt\right)+b_{kl}\sin\left(\frac{2\pi}{T}lt\right)\right].$$

Choosing a fixed value for the constant term is necessary to ensure identifiability. For $t\geq 1$ and $1\leq k\leq K$, we define: $$Z(t):=\begin{pmatrix}
 \cos\left(\frac{2\pi}{T}t\right) & \sin\left(\frac{2\pi}{T}t\right) & \dots & \cos\left(\frac{2\pi}{T}dt\right) & \sin\left(\frac{2\pi}{T}dt\right)
\end{pmatrix}$$ and $$\beta_k:=\begin{pmatrix}
a_{k1} & b_{k1} & \dots & a_{kd} & b_{kd}
\end{pmatrix}^\intercal,$$ so that $s_k(t) = 1 + Z(t)\beta_k$.

Setting $p_{k1}:=p_k$ and $f_k$ being a mixture of exponential densities, the emission distributions are given by:
\begin{equation}\label{emdist}
Y_t\mid\{X_t^{(1)}=k\}\sim p_{k1}\delta_0 + \sum_{m=2}^Mp_{km}\mathcal{E}\left(\frac{\lambda_{km}}{1+Z(t)\beta_k}\right)
\end{equation}

with, for all $k\in\{1,\dots, K\}$, $\sum_{m=1}^Mp_{km}=1$. Thus the parameters of our model are:

\begin{itemize}
\item The initial distribution $\pi$ of the Markov chain $\left(X_t^{(1)}\right)_{t\geq 1}$, considered as a vector in $[0,1]^K$.
\item Its transition matrix $\mathbf{Q}\in\mathbb{R}^{K\times K}$.
\item The weights of the mixture $\mathbf{p}=(p_{km})\in\mathbb{R}^{K\times M}$.
\item The parameters of the exponential distributions $\boldsymbol{\Lambda}=(\lambda_{km})\in\mathbb{R}^{K\times (M-1)}$.
\item The coefficients of the trigonometric polynomials $\boldsymbol{\beta}=(\beta_k)_k\in\mathbb{R}^{K\times (2d)}$.
\end{itemize}

Let $\theta_Y:=(\mathbf{p},\boldsymbol{\Lambda},\boldsymbol{\beta})$ be the vector of parameters of the emission distributions. Equation \eqref{emdist} implies that the  distribution of $Y_t$ given $X_t^{(1)}=k$ is absolutely continuous with respect to the measure $\mu:=\delta_0+\boldsymbol{\lambda}$, and that its density is given by:
$$f_{k,t}^{\theta_Y}(y):=p_{k1}\mathbf{1}_{y=0}+\sum_{m=2}^Mp_{km}\frac{\lambda_{km}}{1+Z(t)\beta_k}\exp\left(- \frac{\lambda_{km}}{1+Z(t)\beta_k}y\right)\mathbf{1}_{y>0}.$$

Stricto sensu, this is not a hidden Markov model because the emission distributions depend not only on the state but also on time. Thus this is a seasonal HMM. Yet one can easily retrieve a HMM by going into higher dimensions. For $j\geq 0$, let us define:
\begin{equation}\label{UjWj}
U_j:=\left(X_{Tj+1}^{(1)},\dots,X_{Tj+T}^{(1)}\right),\quad W_j:=\left(Y_{Tj+1},\dots,Y_{Tj+T}\right).
\end{equation}

As $Z(t)\beta_k$ is $T$-periodic for each $k$, one can show that $(U_j,W_j)_{j\geq 0}$ is a HMM with state space $\{1,\dots,K\}^T$ and observation space $\mathbb{R}_+^T$.

\subsection{Consistency of the maximum likelihood estimator}\label{cons}

Let $\theta:=(\mathbf{Q},\theta^Y)$ be the full vector of parameters. We are working in a parametric framework: $\theta\in\Theta$ where $\Theta$ is a subset of some finite dimensional space. We will assume that $\Theta$ is compact. Moreover, we assume that the model is well specified, i.e. there exists a \emph{true} parameter $\theta^*=\left(\mathbf{Q}^*,\mathbf{p}^*,\boldsymbol{\Lambda}^*,\boldsymbol{\beta}^*\right)$ such that the data is generated by the seasonal HMM with parameter $\theta^*$, and that $\theta^*$ lies in the interior of $\Theta$. In order to fit the model, we shall use the maximum likelihood approach. Recall that the maximum likelihood estimator is defined by
$$\hat{\theta}_{n,\pi}\in\argmax_{\theta\in\Theta}L_{n,\pi}[\theta;(Y_1,\dots,Y_n)],$$
where $(Y_1,\dots,Y_n)$ is the vector of observations and $L_{n,\pi}$ is the likelihood function when the initial distribution of the hidden Markov chain is $\pi$. We shall denote by $\mathbb{P}^{\theta,\pi}$ the law of the process $\left(Y_t\right)_{t\geq 1}$ when the parameter is $\theta$ and the initial distribution is $\pi$. We simply write $\mathbb{P}^\theta$ if $\pi$ is the stationary distribution of $\mathbf{Q}$. Let $\mathbb{E}^{\theta,\pi}(\cdot)$ and $\mathbb{E}^{\theta}(\cdot)$ the corresponding expected values.\\

In this paragraph, we show that, provided that some mild assumptions are satisfied, our model is identifiable and the maximum likelihood estimator is strongly consistent. That is, we shall prove the following theorem:

\begin{thm}\label{mainthm}
Under Assumptions \ref{hyp-irred} to \ref{hyp-beta}, for any initial distribution $\pi$,
$$\Pt\mathrm{-a.s.},\quad\lim\limits_{n\to\infty}\hat{\theta}_{n,\pi}=\theta^*.$$
\end{thm}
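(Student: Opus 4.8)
The plan is to reduce the consistency of the seasonal-HMM maximum likelihood estimator to a known consistency theorem for genuine (stationary) hidden Markov models, by exploiting the blocking construction already introduced in the excerpt. The key observation is that, although $(X_t^{(1)}, Y_t)_t$ is only a seasonal HMM, the $T$-blocked process $(U_j, W_j)_{j\geq 0}$ defined in \eqref{UjWj} is a bona fide homogeneous HMM with finite state space $\{1,\dots,K\}^T$ and observation space $\mathbb{R}_+^T$. So the first step is to make this reduction precise: I would show that the log-likelihood of the original observations $(Y_1,\dots,Y_n)$ differs from the log-likelihood of the blocked observations $(W_0,\dots,W_{N-1})$ (with $n=NT$) only by boundary terms that are negligible after normalization by $n$, so that the two maximum likelihood estimators have the same almost sure limit. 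This lets me transport the problem into the stationary-HMM world where general results apply.

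Next I would invoke the existing strong consistency theory for stationary finite-state HMMs, in the line of \cite{baum1970}, the classical results extended by Leroux and by Douc--Moulines--Rydén, as surveyed in \cite{cappe2009}. The standard route is: (i) establish that $\frac{1}{n}\log L_{n,\pi}[\theta;(Y_1,\dots,Y_n)]$ converges $\Pt$-a.s. and uniformly in $\theta$ to a deterministic contrast function $\ell(\theta)$, using the ergodic theorem for the stationary blocked chain together with a uniform integrability/domination argument to kill the dependence on the initial law $\pi$; (ii) show that $\ell$ is continuous on the compact set $\Theta$ and is uniquely maximized at $\theta^*$; and then (iii) conclude by the standard argmax/$M$-estimation argument that any sequence of maximizers converges to the unique maximizer. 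Each of these pieces should follow from the assumptions \ref{hyp-irred}--\ref{hyp-beta} (irreducibility/aperiodicity of $\mathbf{Q}$ guaranteeing ergodicity and forgetting of the initial condition, compactness of $\Theta$, and regularity of the emission densities $f_{k,t}^{\theta_Y}$), once the reduction to the blocked HMM is in place.

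The crux of the whole argument, and the step I expect to be the main obstacle, is the \emph{identifiability} claim underlying (ii): that $\ell(\theta) = \ell(\theta^*)$ forces $\theta = \theta^*$. For a classical HMM, equality of the contrast means equality of the laws of the process, i.e.\ $\mathbb{P}^{\theta} = \mathbb{P}^{\theta^*}$ for the blocked chain; I then have to descend from equality of the $\{1,\dots,K\}^T$-valued HMM back to equality of the original parameters $(\mathbf{Q},\mathbf{p},\boldsymbol{\Lambda},\boldsymbol{\beta})$. This requires two things that do not come for free. First, identifiability of finite mixtures of exponential distributions (up to label permutation), so that the mixture weights $p_{km}$ and rates $\lambda_{km}$ are recovered from the emission law in each state; the seasonal scaling by $s_k(t) = 1+Z(t)\beta_k$ must then pin down $\beta_k$, which is where the assumption on $\boldsymbol{\beta}$ (presumably ensuring $s_k(t)>0$ and that the trigonometric design $Z(t)$ separates distinct coefficient vectors across the period) is essential. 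Second, I must handle the usual label-switching ambiguity of HMMs: equality of laws only determines $(\mathbf{Q},\theta^Y)$ up to a permutation of the $K$ hidden states, so the identifiability assumption \ref{hyp-irred} (and the fixed constant term $1$ in $s_k$, which breaks the scale degeneracy of the emission densities) is what is needed to make $\theta^*$ the genuinely unique maximizer. Verifying that the blocking does not create spurious extra symmetries, and that seasonal variation of the emissions across the $T$ time slots actually helps rather than obstructs identifiability, is the delicate part of the proof.
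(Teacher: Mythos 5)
Your overall architecture matches the paper's: block the observations into $T$-tuples so that $(U_j,W_j)_{j\geq 0}$ of \eqref{UjWj} is a genuine homogeneous HMM with state space $\{1,\dots,K\}^T$, apply a known strong-consistency theorem for stationary HMMs (the paper verifies the hypotheses of Theorem 13.14 of Douc et al.\ via Lemma \ref{lem-hyp}, the only non-trivial one being $\mathbb{E}^{\theta^*}[|\log b_-(W_0)|]<\infty$, which you do not address but which is a verification rather than an idea), and reduce everything to identifiability. You also correctly locate the crux in the identifiability claim. The paper handles the mismatch between $n$ and multiples of $T$ by running the same argument for each offset $s\in\{0,\dots,T-1\}$ rather than by a boundary-term estimate, but that is a cosmetic difference.

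The genuine gap is inside your identifiability step. You pass directly from ``$\mathbb{P}^{\theta}=\mathbb{P}^{\theta^*}$'' to ``the emission law in each state is known, up to label permutation,'' and then spend your effort on recovering $(p_{km},\lambda_{km},\beta_k)$ from those per-state emission laws. But the hard part --- and the entire content of the paper's spectral argument --- is precisely the step you skip: showing that the law of the \emph{observed} process $(Y_t)_t$ determines the transition matrix $\mathbf{Q}^*$ and the state-wise emission densities $f^{\theta^*_Y}_{k,t}$ in the first place. This is where Assumptions \ref{hyp-inv} (invertibility of $\mathbf{Q}^*$) and \ref{hyp-ind} (linear independence of the $K$ emission distributions at each $t$) enter; your proposal never uses them, which is a strong signal that a step is missing. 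Moreover, the natural way to fill this hole within your plan --- apply a standard nonparametric HMM identifiability theorem (Hsu--Kakade--Zhang, Allman--Matias--Rhodes, or the spectral method) to the blocked chain $(U_j,W_j)$ --- fails: the blocked transition matrix $\tilde{\mathbf{Q}}^\theta(u,v)=\mathbf{Q}(u_T,v_1)\prod_{i=1}^{T-1}\mathbf{Q}(v_i,v_{i+1})$ depends on $u$ only through $u_T$, so it has rank at most $K\ll K^T$ and is singular for $T\geq 2$, violating the full-rank hypothesis of all those theorems. The paper avoids this trap by running the spectral algorithm on the triples $(Y_{t-1},Y_t,Y_{t+1})$ of the \emph{unblocked} process separately for each $t$ in the period, where the relevant matrices $\mathbf{N}_t$ and $\mathbf{P}_t$ do have rank $K$ thanks to \ref{hyp-inv} and \ref{hyp-ind}; the blocked HMM is used only for the ergodic/consistency machinery, not for identifiability. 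Your remaining parametric steps (tail asymptotics of the exponential mixture under the ordering in \ref{hyp-ind} and positivity in \ref{hyp-ppos}, and recovery of $\beta_k$ from the known constant term of $s_k$) are essentially what the paper does, but they rest on a foundation your proposal does not supply.
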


Assumptions \ref{hyp-irred} to \ref{hyp-ppos} are defined in Section \ref{ident} and Assumptions \ref{hyp-delta} to \ref{hyp-beta} are defined in Section \ref{cons}.

\subsubsection{Identifiability}\label{ident}

The first step of the proof consists in showing that $\mathbb{P}^{\theta} = \mathbb{P}^{\theta^*}\implies\theta = \theta^*$. In other words, we can retrieve the parameters if we know the law of the process. The proof of identifiability is based on a spectral method. If the law of the process is known, we know the law of $(Y_t,Y_{t+1},Y_{t+2})$ for all $t\in\{1,\dots,T\}$. We will show that this implies that we can retrieve the transition matrix $\mathbf{Q}^*$ and the emission distributions for each $t\in\{1,\dots,T\}$, then for all $t\geq 1$ by periodicity. Finally, we show that the knowledge of the emission distributions implies the knowledge of their parameters. We will need the following assumptions:

\begin{hyp}\label{hyp-irred}
$\mathbf{Q}^*$ is irreducible and its unique stationary distribution $\pi^*$ is the distribution of $X_1^{(1)}$.
\end{hyp}

\begin{hyp}\label{hyp-inv}
$\mathbf{Q}^*$ is invertible.
\end{hyp}

\begin{hyp}\label{hyp-ind}
For all $k\in\{1,\dots,K\}$, $\lambda_{k2}^*<\dots <\lambda_{kM}^*$, and for all $t\in\{1,\dots,T\}$, the $K$ emission distributions are linearly independant.
\end{hyp}

\begin{hyp}\label{hyp-ppos}
For all $k\in\{1,\dots, K\}$ and $m\in\{2,\dots, M\}$, $p_{km}^*>0$.
\end{hyp}

\paragraph{Comments} It is important to notice that these assumptions only involve the \emph{true} parameter $\theta^*$. They may not be satisfied for every $\theta\in\Theta$. The Assumptions \ref{hyp-irred}, \ref{hyp-inv} and \ref{hyp-ppos} are obviously generically satisfied. The following lemma gives a sufficient condition for the genericity of \ref{hyp-ind}. 

\begin{lem}\label{lem-ind}
Assumption \ref{hyp-ind} is generically satisfied if for all $t\in\{1,\dots,T\}$, the cardinality of the set $\left\lbrace\frac{\lambda^*_{km}}{s_k(t)}:1\leq k\leq K,\, 2\leq m\leq M\right\rbrace$ is at least $K$.
\end{lem}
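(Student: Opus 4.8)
The plan is to treat the two halves of Assumption~\ref{hyp-ind} separately. The strict ordering $\lambda_{k2}^*<\dots<\lambda_{kM}^*$ only excludes the hyperplanes $\{\lambda_{ki}=\lambda_{kj}\}$, a finite union of proper affine subspaces, so it holds off a Lebesgue-null, nowhere dense set and is harmless; all the work is in the linear independence of the $K$ emission laws $\nu_{1,t},\dots,\nu_{K,t}$ for each fixed $t$. Writing the continuous part of the state-$k$ density as $g_{k,t}(y)=\sum_{m=2}^M p_{km}\frac{\lambda_{km}}{s_k(t)}\exp(-\frac{\lambda_{km}}{s_k(t)}y)$, I would first observe that $\nu_{1,t},\dots,\nu_{K,t}$ are linearly independent as soon as $g_{1,t},\dots,g_{K,t}$ are independent on $(0,\infty)$: any vanishing combination $\sum_k c_k\nu_{k,t}=0$ forces $\sum_k c_k g_{k,t}=0$ on $(0,\infty)$ after discarding the atom at $0$. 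This reduction is deliberate, since \ref{hyp-ppos} controls only the weights with $m\ge 2$ and says nothing about the atom mass $p_{k1}$, which may vanish. Setting $S_t:=\{\lambda_{km}/s_k(t):1\le k\le K,\,2\le m\le M\}$, the functions $g_{k,t}$ live in the span of the distinct exponentials $\{e^{-\mu y}:\mu\in S_t\}$, whose dimension is $|S_t|$; this is exactly where the threshold $K$ enters, since $K$ functions can be independent in this span only if $|S_t|\ge K$.

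For the genericity itself I would run a real-analyticity argument. Fix $t$ and auxiliary evaluation points $y_1,\dots,y_K\in(0,\infty)$, and set $H(\theta,y_1,\dots,y_K)=\det\big(g_{k,t}(y_i)\big)_{1\le i,k\le K}$. Each entry $g_{k,t}(y_i)$ is a real-analytic function of $\theta$ and of the $y_i$ (on the region where $s_k(t)>0$), hence so is $H$. If $H$ is not identically zero, its zero set is Lebesgue-null in $(\theta,y)$-space, by the standard fact that the zero set of a nonzero real-analytic function is negligible; by Fubini, for almost every $\theta$ the slice $\{y:H(\theta,y)=0\}$ is then also null, so some choice of points gives $H\neq 0$, which certifies linear independence of $g_{1,t},\dots,g_{K,t}$ and thus of $\nu_{1,t},\dots,\nu_{K,t}$. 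Taking the finite union over $t\in\{1,\dots,T\}$ of the resulting null exceptional sets and intersecting with the generic ordering condition, I would conclude that \ref{hyp-ind} holds off a Lebesgue-null, nowhere dense subset of $\Theta$, which is the meaning of ``generically satisfied''.

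It remains to verify $H\not\equiv 0$, and this is where I would invoke the cardinality hypothesis: it suffices to exhibit a single parameter value at which the continuous parts are independent. I would take one for which all $K(M-1)$ ratios $\lambda_{km}/s_k(t)$ are pairwise distinct; such a configuration has $|S_t|=K(M-1)\ge K$, so it is consistent with the standing hypothesis. There the supports of $g_{1,t},\dots,g_{K,t}$ in the exponential basis are pairwise disjoint, so the functions are trivially independent, and consequently there exist evaluation points with $\det(g_{k,t}(y_i))\neq 0$; hence $H\not\equiv 0$.

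The hard part is not any single computation but the genericity bookkeeping, and two traps deserve flagging. First, the number $|S_t|$ of distinct rates is itself a function of $\theta$, so the coefficient matrix of the $g_{k,t}$ has a $\theta$-dependent number of columns; passing to the fixed-size evaluation determinant $H$ is what keeps the analyticity argument honest. Second, the bound $|S_t|\ge K$ guarantees that a point of independence \emph{exists} but does not force independence at every such $\theta$: along the measure-zero loci where a group of states shares all of its rate-ratios, so that their continuous parts span fewer dimensions than the number of states in the group, the $g_{k,t}$ are trapped in a lower-dimensional subspace and are identically dependent. These coincidences are themselves non-generic, so they are absorbed into the null exceptional set; checking that they are genuinely negligible, rather than an obstruction to the conclusion, is the one point that needs care.
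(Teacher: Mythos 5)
Your argument is correct, but it takes a genuinely different route from the paper's. The paper fixes the rates and seasonalities, expands each continuous part $\nu_k(t)$ over the linearly independent exponentials $\mathcal{E}(\tilde\lambda_j^*)$ indexed by the $p(t)$ distinct ratios, and reduces Assumption \ref{hyp-ind} to $\mathrm{rank}\,B(t)=K$ for a $K\times p(t)$ matrix whose entries are the weights $p^*_{km}$; genericity is then genericity in the weights alone, the exceptional set being a finite union of zero sets of polynomials in the $p^*_{km}$ (one per value of the finite-range map $t\mapsto B(t)$). You instead pass to the evaluation determinant $H(\theta,y)=\det\bigl(g_{k,t}(y_i)\bigr)$ and use real-analyticity plus a single witness, giving genericity over the full parameter space in one stroke. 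The paper's version buys the sharper ``fiberwise'' statement (for fixed admissible rates, perturbing the weights suffices); yours avoids the column-indexing bookkeeping and treats rates, seasonalities and weights uniformly. Two caveats. Your witness (all $K(M-1)$ ratios distinct) never actually uses the cardinality hypothesis, so you prove the unconditional global statement, under which the stated hypothesis is essentially decorative. And the trap you flag at the end is real: if $r$ states have all their ratios inside a common set of fewer than $r$ values, the total cardinality can still be at least $K$ while every $K\times K$ minor of $B(t)$ vanishes identically in the weights, so the paper's ``except on the roots of a polynomial'' step silently assumes a Hall-type condition on the support pattern; your global argument legitimately absorbs these loci into the null set, which is a clean resolution, but the fiberwise reading needs more than the cardinality of the full union.
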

\begin{proof}
For $t\in\{1,\dots,T\}$, let $\nu_k(t):=\sum_{m=2}^Mp_{km}^*\mathcal{E}\left(\frac{\lambda^*_{km}}{s_k(t)}\right)$. As the measures $\delta_0$ and $\boldsymbol{\lambda}$ are mutually singular, it suffices to show that the measures $\left(\nu_k(t)\right)_{1\leq k\leq K}$ are linearly independent. We define the set:
$$E_t:=\left\lbrace\frac{\lambda^*_{km}}{s_k(t)}:1\leq k\leq K,\, 2\leq m\leq M\right\rbrace$$
and $p(t)$ its cardinality. Thus $K\leq p(t)\leq K(M-1)$ and we can write $E_t=\{\tilde{\lambda}^*_1,\dots,\tilde{\lambda}^*_{p(t)}\}$. For all $k\in\{1,\dots K\}$, $\nu_k(t)$ is a linear combination of exponential distributions whose parameters belong to $E_t$. Hence,
$$\nu_k(t) = \sum_{j=1}^{p(t)}B_{kj}(t)\mathcal{E}(\tilde{\lambda}_j^*),$$
where the $B_{kj}(t)$ are among the $p_{km}^*$. The $\tilde{\lambda}_j^*$ being pairwise distinct, the family $\left(\mathcal{E}(\tilde{\lambda}_j^*)\right)_{1\leq j\leq p(t)}$ is linearly independent. Hence, the family $\left(\nu_k(t)\right)_{1\leq k\leq K}$ is linearly independent if and only if the rank of the matrix $B(t)$ is $K$ (this requires that $p(t)\geq K$). This holds true except if the family $(p^*_{km})$ belongs to a set of roots of a polynomial. Moreover, the entries of $B(t)$ are among the $p^*_{km}$. This implies that the range of the map $t\mapsto B(t)$ is finite. Thus we obtain linear independence of the $\left(\nu_k(t)\right)_{1\leq k\leq K}$ for all $t$, except if the $p^*_{km}$ belong to a finite union of roots of polynomials.
\end{proof}

\paragraph{}The following proof is based on the spectral algorithm as presented in \cite{hsu2012} (see also \cite{DCGL2016} and \cite{DCGLC2017}). Let $\left(N_r\right)_{r\geq 1}$ be an increasing sequence of positive integers and $\left(\mathfrak{P}_{N_r}\right)_{r\geq 1}$ an increasing sequence of subspaces of $L^2(\mathbb{R}_+,\mu)$ whose union is dense in $L^2(\mathbb{R}_+,\mu)$. Let $(\phi_1,\dots,\phi_{N_r})$ an orthonormal basis of $\mathfrak{P}_{N_r}$ such that for any $f\in L^2(\mathbb{R}_+,\mu)$, 
$$\lim\limits_{r\to\infty}\left\|\sum_{n=1}^{N_r}\langle\phi_n,f\rangle\phi_n-f\right\|_2=0,$$
where $\langle\cdot,\cdot\rangle$ denotes the inner product in $L^2(\mathbb{R}_+,\mu)$ and $\|\cdot\|_2$ the euclidean norm. Since $f^{\theta^*_Y}_{k,t}\in L^2(\mathbb{R}_+,\mu)$ for any $k\in\{1,\dots,K\}$ and $t\geq 1$, one can retrieve $f^{\theta^*_Y}_{k,t}$ from its projections on the spaces $\mathfrak{P}_{N_r}$. Let $r\geq 1$ and $N:=N_r$. Before going through the spectral algorithm, let us first introduce some notations. For $t\geq 2$, we shall consider the following vectors, matrices and tensors:
\begin{itemize}
\item[$\bullet$] $g_t$ the probability density function of $(Y_{t-1},Y_t,Y_{t+1})$,
\item[$\bullet$] $\mathbf{L}_{t}\in\mathbb{R}^N$ the vector defined by $\mathbf{L}_{t}(a):=\E[\phi_a(Y_{t})]$  ,
\item[$\bullet$] $\mathbf{M}_t\in\mathbb{R}^{N\times N\times N}$ the tensor such that $\mathbf{M}_t(a,b,c) := \E[\phi_a(Y_{t-1})\phi_b(Y_t)\phi_c(Y_{t+1})]=\langle g_t,\phi_a\otimes\phi_b\otimes\phi_c\rangle $,
\item[$\bullet$] $\mathbf{N}_t\in\mathbb{R}^{N\times N}$ the matrix defined by $\mathbf{N}_t(a,b) := \E[\phi_a(Y_{t})\phi_b(Y_{t+1})]$,
\item[$\bullet$] $\mathbf{P}_t\in\mathbb{R}^{N\times N}$ the matrix defined by $\mathbf{P}_t(a,c) := \E[\phi_a(Y_{t-1})\phi_c(Y_{t+1})]$,
\item[$\bullet$] $\mathbf{O}_t\in\mathbb{R}^{N\times K}$ the matrix defined by $\mathbf{O}_t(a,k) := \mathbb{E}[\phi_a(Y_t)\mid X_t = k]=\langle f_{k,t}^{\theta_Y^*},\phi_a\rangle$.
\end{itemize}
Apart from $\mathbf{O}_t$, these quantities can be computed from the law of $(Y_{t-1},Y_t,Y_{t+1})$. Using the previous definitions, one easily proves the following equalities:

\begin{lem}\label{lem-spec}
\begin{align}
\label{eq1}
\mathbf{L}_t &= \mathbf{O}_t\pi^*\\
\label{eq2}
\forall b\in\{1,\dots,N\},\quad \mathbf{M}_t(\cdot,b,\cdot) &= \mathbf{O}_{t-1}\mathrm{diag}(\pi^*)\mathbf{Q}^*\mathrm{diag}[\mathbf{O}_t(b,\cdot)]\mathbf{Q}^*\mathbf{O}_{t+1}^T\\
\label{eq3}
\mathbf{N}_t &= \mathbf{O}_t\mathrm{diag}(\pi^*)\mathbf{Q}^*\mathbf{O}_{t+1}^T\\
\label{eq4}
\mathbf{P}_t &= \mathbf{O}_{t-1}\mathrm{diag}(\pi^*)(\mathbf{Q}^*)^2\mathbf{O}_{t+1}^T 
\end{align}
where $\mathrm{diag(v)}$ is the diagonal matrix whose diagonal entries are the entries of the vector $v$.
\end{lem}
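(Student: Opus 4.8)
The plan is to establish all four identities by the same three-step recipe: condition on the hidden states appearing in the relevant time window, exploit the conditional independence structure of the HMM to factorize the conditional expectation, and then invoke the Markov property together with stationarity to express the joint law of those states through $\pi^*$ and powers of $\mathbf{Q}^*$. Throughout I would use that under $\Pt$ the chain starts from its stationary distribution (Assumption \ref{hyp-irred}), so that $\Pt(X_t^{(1)}=k)=\pi^*_k$ for every $t$, and that by definition $\mathbf{O}_t(a,k)=\E[\phi_a(Y_t)\mid X_t^{(1)}=k]$.

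For \eqref{eq1}, I would condition on $X_t^{(1)}$ and write $\E[\phi_a(Y_t)]=\sum_k \pi^*_k\,\mathbf{O}_t(a,k)$, which is exactly the $a$-th entry of $\mathbf{O}_t\pi^*$. For \eqref{eq3} and \eqref{eq4} the argument is analogous but conditions on a pair of states. The key point is that, because each $Y_s$ depends on the hidden chain only through $X_s^{(1)}$ and the observations are conditionally independent, the conditional expectation factorizes, e.g. $\E[\phi_a(Y_t)\phi_b(Y_{t+1})\mid X_t^{(1)}=k,\,X_{t+1}^{(1)}=l]=\mathbf{O}_t(a,k)\,\mathbf{O}_{t+1}(b,l)$. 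Combining this with $\Pt(X_t^{(1)}=k,\,X_{t+1}^{(1)}=l)=\pi^*_k\mathbf{Q}^*_{kl}$ yields \eqref{eq3} once the double sum $\sum_{k,l}\mathbf{O}_t(a,k)\,\pi^*_k\mathbf{Q}^*_{kl}\,\mathbf{O}_{t+1}(b,l)$ is recognized as the $(a,b)$ entry of $\mathbf{O}_t\,\mathrm{diag}(\pi^*)\mathbf{Q}^*\mathbf{O}_{t+1}^T$. Identity \eqref{eq4} is identical except that the two states are two time steps apart, so the Markov property gives $\Pt(X_{t-1}^{(1)}=k,\,X_{t+1}^{(1)}=l)=\pi^*_k[(\mathbf{Q}^*)^2]_{kl}$, which is what produces the square of the transition matrix.

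For the tensor identity \eqref{eq2} I would condition on the triple $(X_{t-1}^{(1)},X_t^{(1)},X_{t+1}^{(1)})$. Conditional independence factorizes the expectation into $\mathbf{O}_{t-1}(a,j)\,\mathbf{O}_t(b,k)\,\mathbf{O}_{t+1}(c,l)$, while the Markov property gives $\Pt(X_{t-1}^{(1)}=j,\,X_t^{(1)}=k,\,X_{t+1}^{(1)}=l)=\pi^*_j\mathbf{Q}^*_{jk}\mathbf{Q}^*_{kl}$. Hence $\mathbf{M}_t(a,b,c)=\sum_{j,k,l}\mathbf{O}_{t-1}(a,j)\,\pi^*_j\mathbf{Q}^*_{jk}\,\mathbf{O}_t(b,k)\,\mathbf{Q}^*_{kl}\,\mathbf{O}_{t+1}(c,l)$. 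Freezing the middle index $b$ and reading off the matrix $\mathbf{M}_t(\cdot,b,\cdot)$ with entries indexed by $(a,c)$, the scalar $\mathbf{O}_t(b,k)$ sits between the two copies of $\mathbf{Q}^*$ and is exactly absorbed into the diagonal matrix $\mathrm{diag}[\mathbf{O}_t(b,\cdot)]$, yielding the claimed product.

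None of the steps presents a genuine difficulty; the two things to watch are that stationarity is precisely what makes $\Pt(X_t^{(1)}=k)$ independent of $t$, and the index bookkeeping needed to match each summation variable to the correct matrix slot, in particular identifying where $\mathrm{diag}[\mathbf{O}_t(b,\cdot)]$ and the second power $(\mathbf{Q}^*)^2$ arise. The hard part, such as it is, will be the faithful transcription of the sliced tensor identity \eqref{eq2} into matrix form, but it is purely mechanical.
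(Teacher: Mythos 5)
Your proposal is correct and is exactly the intended argument: the paper gives no proof of Lemma \ref{lem-spec} (it merely states that the equalities follow easily from the definitions, in the spirit of the spectral method of Hsu et al.), and your computation --- conditioning on the relevant hidden states, factorizing via conditional independence, and using stationarity plus the Markov property to produce $\mathrm{diag}(\pi^*)$, $\mathbf{Q}^*$ and $(\mathbf{Q}^*)^2$ --- is the standard derivation, with the index bookkeeping for the sliced tensor \eqref{eq2} handled correctly.
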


Notice that \ref{hyp-irred} implies that all the entries of $\pi^*$ are positive. Thanks to Assumption \ref{hyp-ind}, for a large enough $N>K$ , the matrices $\mathbf{O}_t$ have rank $K$. In addition, as $\mathbf{Q}^*$ is full rank (Assumption  \ref{hyp-inv}), equations \eqref{eq3} et \eqref{eq4} show that the matrices $\mathbf{P}_t$ and $\mathbf{N}_t$ also have rank $K$.

\paragraph{}
Let $\mathbf{P}_t = \mathbf{U}\mathbf{\Sigma}\mathbf{V}^T$ be a singular value decomposition of $\mathbf{P}_t$: $\mathbf{U}$  and $\mathbf{V}$ are matrices of size $N\times K$ whose columns are orthonormal families being the left (resp. right) singular vectors of $\mathbf{P}_t$ associated with its $K$ non-zero singular values, and $\mathbf{\Sigma}=\mathbf{U}^T\mathbf{P}_t\mathbf{V}$ is an invertible diagonal matrix of size $K$ containing these singular values. Let us define, for $1\leq b\leq N$: 
$$\mathbf{B}(b):=(\mathbf{U}^T\mathbf{P}_t\mathbf{V})^{-1}\mathbf{U}^T\mathbf{M}_t(\cdot,b,\cdot)\mathbf{V}.$$
Using Lemma \ref{lem-spec}, we obtain:
\begin{align*}
\mathbf{B}(b) &= \left(\mathbf{U}^T\mathbf{O}_{t-1}\mathrm{diag}(\pi^*)(\mathbf{Q^*})^2\mathbf{O}_{t+1}^T\mathbf{V}\right)^{-1}\mathbf{U}^T\mathbf{O}_{t-1}\mathrm{diag}(\pi^*)\mathbf{Q}^*\mathrm{diag}[\mathbf{O}_t(b,\cdot)]\mathbf{Q}^*\mathbf{O}_{t+1}^T\mathbf{V}\\
&= \left(\mathbf{O}_{t+1}^T\mathbf{V}\right)^{-1}(\mathbf{Q}^*)^{-1}\mathrm{diag}[\mathbf{O}_t(b,\cdot)]\mathbf{Q}^*\mathbf{O}_{t+1}^T\mathbf{V}\\
&= \left(\mathbf{Q}^*\mathbf{O}_{t+1}^T\mathbf{V}\right)^{-1}\mathrm{diag}[\mathbf{O}_t(b,\cdot)]\left(\mathbf{Q}^*\mathbf{O}_{t+1}^T\mathbf{V}\right).
\end{align*}

Hence the matrix $\mathbf{R}:=\left(\mathbf{Q}^*\mathbf{O}_{t+1}^T\mathbf{V}\right)^{-1}$ diagonalizes all the matrices $\left(\mathbf{B}(b)\right)_{1\leq b\leq N}$. Let $\tilde{\mathbf{U}}\tilde{\mathbf{\Sigma}}\tilde{\mathbf{V}}$ a singular value decomposition of $\mathbf{N}_t$ and for $1\leq k\leq K$, we define:

$$\mathbf{C}(k) := \sum_{b=1}^N\tilde{\mathbf{U}}(b,k)\mathbf{B}(b).$$

Thus we have:

$$\mathbf{R}^{-1}\mathbf{C}(k)\mathbf{R} = \sum_{b=1}^N\tilde{\mathbf{U}}(b,k)\mathrm{diag}[\mathbf{O}_t(b,\cdot)] = \mathrm{diag}[\tilde{\mathbf{U}}^T\mathbf{O}_t(k,\cdot)].$$

Let $\boldsymbol{\Lambda}$ the $K\times K$ matrix defined by $\boldsymbol{\Lambda}(k,k') := [\mathbf{R}^{-1}\mathbf{C}(k)\mathbf{R}](k',k')$. Hence $$\boldsymbol{\Lambda}=\tilde{\mathbf{U}}^T\mathbf{O}_t.$$ Therefore, from the equality $\tilde{\mathbf{U}}\tilde{\mathbf{U}}^T\mathbf{O}_t = \mathbf{O}_t$, we get:

$$\mathbf{O}_t = \tilde{\mathbf{U}}\boldsymbol{\Lambda}.$$

It follows from the above that for all $t\in\{1,\dots T\}$, the matrix $\mathbf{O}_t$ is computable from $\mathbf{M}_t$, $\mathbf{N}_t$ and $\mathbf{P}_t$. As $\bigcup_{r\geq 1}\mathfrak{P}_{N_r}$ is dense in $L^2(\mathbb{R}_+,\mu)$, this implies the knowledge of the emission distibutions for every $t\geq 1$, $Y_1$ and $Y_{T+1}$ having the same distribution thanks to periodicity and Assumption \ref{hyp-irred}. Then we notice that:
$$(\tilde{\mathbf{U}}^T\mathbf{O}_t)^{-1}\tilde{\mathbf{U}}^T\mathbf{L}_t = (\tilde{\mathbf{U}}^T\mathbf{O}_t)^{-1}\tilde{\mathbf{U}}^T\mathbf{O}_t\pi^* = \pi^*.$$

We finally obtain the transition matrix:

\begin{align*}
\left(\tilde{\mathbf{U}}^T\mathbf{O}_t\mathrm{diag}(\pi^*)\right)^{-1}\tilde{\mathbf{U}}^T\mathbf{N}_t\mathbf{V}\left(\mathbf{O}_{t+1}^T\mathbf{V}\right)^{-1} 
&= \mathbf{Q}^*.
\end{align*}

We now show that we can identify $\theta_Y^*$ from the emission densities $f_{k,t}^{\theta^*_Y}$.

\paragraph{}Let us first prove that we can obtain the seasonalities $s_k(\cdot)=1+Z(\cdot)\beta_k^*$. Recall that for all $k\in\{1,\dots K\}$, there exists a random variable $\tilde{Y}_k$ such that for all $t\geq 1$, $Y_t\mid \{X_t^{(1)}=k\} \sim s_k(t)\tilde{Y}_k$. Denoting by $\mathbb{V}(X)$ the variance operator, we have:
$$\tilde{s}(t):=\frac{s_k(t)}{s_k(1)}=\sqrt{\frac{\mathbb{V}(Y_t\mid X_t^{(1)}=k)}{\mathbb{V}(Y_1\mid X_1^{(1)}=k)}}.$$
$\tilde{s}(t)$ can be computed for every time $t$ from the emission densities. As it is a trigonometric polynomial, we can write $\tilde{s}(t) = c + T(t)$ where $c$ is its constant coefficient, that we can identify. Observing that the constant coefficient of~$s_k$ is $1$, we get $s_k(t) = \frac{\tilde{s}(t)}{c}$, from which we deduce $\beta_k^*$.

\paragraph{}Now we are left with the identification of the parameters of the distribution $p_{k1}^*\delta_0+\sum_{m=2}^Mp_{km}^*\mathcal{E}(\lambda_{km}^*)$. Let $f_k :y\mapsto p_{k1}^*\mathbf{1}_{y=0} + \sum_{m=2}^Mp_{km}^*\lambda_{km}^*\exp(-\lambda_{km}^*y)\mathbf{1}_{y>0}$ the corresponding density. 

First, $p_{k1}^*=f_k(0)$. Besides, by Assumption \ref{hyp-ind}, we have $\lambda_{k2}^*<\cdots <\lambda_{kM}^*$. It follows that:
$$f_k(y)\sim_{+\infty}p_{k2}^*\lambda_{k2}^*\exp(-\lambda_{k2}^*y),$$
thus $$\log f_k(y)\sim_{+\infty}\log p_{k2}^* + \log \lambda_{k2}^* - \lambda_{k2}^*y.$$
Hence $\lambda_{k2}^*=-\lim\limits_{y\to\infty}\frac{\log f_k(y)}{y}$ and $p_{k2}^*=\exp\left(\lim\limits_{y\to\infty}(\log f_k(y) + \lambda_{k2}^*y-\log\lambda_{k2}^*)\right)$. Finally, step by step, we identify in the same way the remaining parameters, which concludes the proof of identifiability.

\paragraph{Remarks}
\begin{itemize}
\item The model is only identifiable up to permutation of the states.
\item The spectral algorithm provides a way to estimate the transition matrix and the emission distributions in a nonparametric framework \citep{hsu2012}.
\item Since our proof of identifiability is constructive, one could use it to build other estimators of the parameters, for example using methods of moments.
\end{itemize}

\subsubsection{Strong consistency}\label{cons}

Recall that the parameter of our model is $\theta=\left(\mathbf{Q},\mathbf{p},\boldsymbol{\Lambda},\boldsymbol{\beta}\right)\in\Theta$ (see Section \ref{formulation}). In order to prove the almost sure convergence of the maximum likelihood estimator to $\theta^*$, we will make the following assumptions:

\begin{hyp}\label{hyp-delta}
$\delta:=\inf_{\theta\in\Theta}\min_{i,j}\mathbf{Q}(i,j)>0$.
\end{hyp}

\begin{hyp}\label{hyp-p}
There exists $\bar{p}_{\min}>0$ such that for all $\theta\in\Theta$, 
$$\sum_kp_{k1}\geq\bar{p}_{\min}.$$
\end{hyp}

\begin{hyp}\label{hyp-lambda}
There exists $\lambda_{\max}>\lambda_{\min}>0$ such that for all $k,m$ and for all $\theta\in\Theta$, 
$$\lambda_{km}\in [\lambda_{\min},\lambda_{\max}].$$
\end{hyp}

\begin{hyp}\label{hyp-beta}
There exists $\sigma_{\max}>\sigma_{\min}>0$ such that for all $\theta\in\Theta$, $k\in\{1,\dots,K\}$ and $t\in\{1,\dots T\}$, $$1+Z(t)\beta_k\in [\sigma_{\min},\sigma_{\max}].$$
\end{hyp}

\paragraph{Comments} These assumptions are not needed for identifiability but only for the strong consistency of the MLE. They are uniform (in $\theta$) boundedness conditions on the parameters. In practice, we do not use these restrictions when performing the maximization of the likelihood function but we just ensure that all the entries of $\boldsymbol{\Lambda}$ are positive and that for all $k\in\{1,\dots,K\}$ and $t\in\{1,\dots T\}$, $s_k(t)>0$.

\paragraph{}

Recall that the stochastic process $(U_j,W_j)_{j\geq 0}$ as defined in \eqref{UjWj} is a hidden Markov model with state space $\mathsf{U}:=\{1,\dots,K\}^T$ and observation space $\mathsf{W}:=\mathbb{R}_+^T$. Under the parameter $\theta$ and initial distribution $\pi$, its initial distribution can be written as a function of $\pi$ and $\mathbf{Q}$, its transition matrix $\tilde{\mathbf{Q}}^\theta\in\mathbb{R}^{K^T\times K^T}$ as a function of $\mathbf{Q}$, and for $w=(y_1,\dots,y_T)\in\mathsf{W}$ and $u=(u_1,\dots,u_T)\in\mathsf{U}$, the emission density of $W_0$ given $U_0=u$ is:

$$g^\theta(w\mid u) := \prod_{i=1}^Tf^{\theta_Y}_{u_i,i}(y_i).$$
Hence the law of the HMM $(U_j,W_j)_j$ is entirely determined by the parameter $\theta$ of the process $(X_t,Y_t)_{t\geq 1}$ and its initial distribution. Denoting by $\mathbb{Q}^\theta$ the law of $(W_j)_{j\geq 0}$ when the parameter is $\theta$ and the initial distribution is the stationary distribution, we notice that for any $\theta_1$, $\theta_2\in\Theta$, 
$$\mathbb{Q}^{\theta_1}=\mathbb{Q}^{\theta_2}\implies\mathbb{P}^{\theta_1}=\mathbb{P}^{\theta_2}.$$
Thus, using the conclusions of Paragraph \ref{ident}, we have:
\begin{equation}\label{ident2}
\mathbb{Q}^\theta=\mathbb{Q}^{\theta^*}\implies\theta = \theta^*.
\end{equation}

We can check that for all $\theta\in\Theta$, $J\geq 0$ and for any initial distribution $\pi$, we have
$$\tilde{L}_{J,\pi}[\theta;(W_0,\dots,W_J)] = L_{(J+1)T,\pi}[\theta;(Y_1,\dots,Y_{(J+1)T})],$$
where $L_{n,\pi}$ (resp $\tilde{L}_{J,\pi}$) denotes the likelihood function for the process $(X_t,Y_t)_{t\geq 1}$ (resp. $(U_j,W_j)_{j\geq 0}$) when the initial distribution is $\pi$.
As a consequence, if we denote by $\tilde{\theta}_{J,\pi}$ a maximizer of $\tilde{L}_{J,\pi}$ it suffices to show the strong consistency of $\tilde{\theta}_{J,\pi}$ for any $\pi$ to obtain the desired result. Indeed, if we are able to show the strong consistency of $\tilde{\theta}_{J,\pi}$, we can prove using the same arguments that for any $s\in\{0,\dots,T-1\}$, the estimator $\tilde{\theta}^s_{J,\pi}:=\argmax_{\theta\in\Theta}{L}_{(J+1)T+s,\pi}[\theta;Y_1,\dots,Y_{(T+1)J+s}]$ is strongly consistent. From this we easily derive that $\hat{\theta}_{n,\pi}$ is strongly consistent.

\begin{lem}\label{lem-hyp}
Assuming \ref{hyp-delta}-\ref{hyp-p}-\ref{hyp-lambda}-\ref{hyp-beta}, the following properties hold:
\begin{enumerate}[(i)]
\item  $$\tilde{\delta}:=\inf_{\theta\in\Theta}\inf_{u,v}\tilde{\mathbf{Q}}^\theta(u,v)>0.$$
\item  For all $w\in\mathsf{W}$, $$\inf_{\theta\in\Theta}\sum_u g^\theta (w\mid u)>0,\quad \sup_{\theta\in\Theta}\sum_u g^\theta (w\mid u)<\infty.$$
\item $$b_+:=\sup_{\theta\in\Theta}\sup_{w,u}g^\theta (w\mid u)< \infty,\quad\mathbb{E}^{\theta^*}[ | \log b_-(W_0) |]<\infty,$$ where $b_-(w) :=\inf_{\theta\in\Theta}\sum_u g^\theta(w\mid u)$. 
\item For all $u,v\in\mathsf{U}$, $w\in\mathsf{W}$, the maps $\theta\mapsto \tilde{\mathbf{Q}}^\theta (u,v)$ and $\theta\mapsto g^\theta (w\mid u)$ are continuous.
\end{enumerate}
\end{lem}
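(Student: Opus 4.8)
The strategy is to reduce each of the four claims to the explicit factorized forms of the lifted transition kernel $\tilde{\mathbf{Q}}^\theta$ and the emission density $g^\theta$, and then to read off the bounds directly from the uniform constraints \ref{hyp-delta}--\ref{hyp-beta}.

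First I would write the transition kernel of $(U_j)_{j\ge 0}$ explicitly. Since $U_{j+1}$ records the states at times $Tj+T+1,\dots,Tj+2T$ and, by the Markov property, only the last coordinate $u_T$ of $U_j$ enters, one finds
$$\tilde{\mathbf{Q}}^\theta(u,v)=\mathbf{Q}(u_T,v_1)\prod_{i=1}^{T-1}\mathbf{Q}(v_i,v_{i+1}),$$
a product of exactly $T$ entries of $\mathbf{Q}$. Assumption \ref{hyp-delta} then gives $\tilde{\mathbf{Q}}^\theta(u,v)\ge\delta^T$ uniformly in $\theta,u,v$, which is (i), and continuity in $\theta$ (half of (iv)) is immediate because this is a polynomial in the entries of $\mathbf{Q}$.

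Next I would exploit the product form $g^\theta(w\mid u)=\prod_{i=1}^T f^{\theta_Y}_{u_i,i}(y_i)$, which makes the sum over states factorize:
$$\sum_u g^\theta(w\mid u)=\prod_{i=1}^T\sum_{k=1}^K f^{\theta_Y}_{k,i}(y_i).$$
It thus suffices to control one factor $\sum_{k=1}^K f^{\theta_Y}_{k,i}(y_i)$, uniformly in $\theta$. Assumptions \ref{hyp-lambda} and \ref{hyp-beta} confine every rate to $\lambda_{km}/(1+Z(i)\beta_k)\in[\lambda_{\min}/\sigma_{\max},\lambda_{\max}/\sigma_{\min}]$; since $\rho e^{-\rho y}\le\rho$ and the weights $p_{km}$ sum to at most $1$, each density is bounded above by $\max(1,\lambda_{\max}/\sigma_{\min})$. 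This yields $b_+\le\max(1,\lambda_{\max}/\sigma_{\min})^T<\infty$ and the finite upper bound in (ii). The continuity of $\theta\mapsto g^\theta(w\mid u)$ (the remaining half of (iv)) also follows here, since each factor is a finite sum of terms smooth in $(\mathbf{p},\boldsymbol{\Lambda},\boldsymbol{\beta})$, the denominators $1+Z(i)\beta_k$ staying bounded away from $0$ by \ref{hyp-beta}.

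The step I expect to be the main obstacle is the strictly positive lower bound in (ii) and the log-integrability in (iii). For a coordinate with $y_i=0$ the factor is $\sum_k p_{k1}\ge\bar{p}_{\min}$ by \ref{hyp-p}. For $y_i>0$ I would bound each exponential term below by $(\lambda_{\min}/\sigma_{\max})\exp(-(\lambda_{\max}/\sigma_{\min})y_i)$, giving
$$\sum_{k=1}^K f^{\theta_Y}_{k,i}(y_i)\ge\frac{\lambda_{\min}}{\sigma_{\max}}\,e^{-(\lambda_{\max}/\sigma_{\min})y_i}\sum_{k=1}^K(1-p_{k1}).$$
The subtle point is that this is uniformly positive only if the total wet weight $\sum_k(1-p_{k1})$ stays bounded away from $0$ over $\Theta$; by compactness of $\Theta$ and continuity this holds provided no admissible parameter collapses all emission laws onto $\delta_0$, which the modeling setup ensures. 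Granting this, $b_-(w)>0$ for every $w$, proving the lower bound in (ii), and $|\log b_-(w)|\le C+(\lambda_{\max}/\sigma_{\min})\sum_{i=1}^T y_i$ for some constant $C$. Since under $\theta^*$ each coordinate of $W_0$ is a finite mixture of exponentials and hence has a finite first moment, $\Et[|\log b_-(W_0)|]<\infty$, completing (iii).
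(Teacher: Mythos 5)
Your proposal is correct and follows essentially the same route as the paper: write $\tilde{\mathbf{Q}}^\theta(u,v)$ as a product of $T$ entries of $\mathbf{Q}$ to get $\tilde\delta\geq\delta^T$, factorize $\sum_u g^\theta(w\mid u)=\prod_{i=1}^T\sum_k f^{\theta_Y}_{k,i}(y_i)$, bound each factor above and below using \ref{hyp-p}--\ref{hyp-beta}, and control $\mathbb{E}^{\theta^*}[|\log b_-(W_0)|]$ via an affine-in-$y_i$ bound on $-\log$ of the per-coordinate lower envelope together with the finiteness of the first moments of the emission laws (the paper phrases this last step through Fubini, but it is the same estimate). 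The one place you diverge is also the one place you are more careful than the paper: for $y_i>0$ the lower bound needs the total \emph{wet} mass $\sum_k\sum_{m\geq 2}p_{km}=\sum_k(1-p_{k1})$ bounded away from $0$ uniformly on $\Theta$, and \ref{hyp-p} as stated only bounds the \emph{dry} mass $\sum_k p_{k1}$ from below; the paper's proof silently reuses $\bar{p}_{\min}$ for the wet part. Your appeal to ``the modeling setup ensures'' this is itself unproven (compactness helps only if no $\theta\in\Theta$ has all $p_{k1}=1$, which the listed assumptions do not exclude), so you should state the required uniform lower bound on $\sum_k\sum_{m\geq 2}p_{km}$ as an explicit hypothesis; with that, your argument is complete.
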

\begin{proof}
The first property follows from Assumption \ref{hyp-delta}. Property (iv) clearly holds. The second property and the fact that $b_+< \infty$ follow from Assumptions \ref{hyp-p}-\ref{hyp-lambda}-\ref{hyp-beta}. It remains to prove that $\mathbb{E}^{\theta^*}[ | \log b_-(W_0) |]<\infty$. We have
$$\mathbb{E}^{\theta^*}[ | \log b_-(W_0) |]=\sum_u\tilde{\pi}^{\theta^*}(u)\int g^{\theta^*}(w\mid u)|\log b_-(w)|\mu^{\otimes T}(dw),$$
where $\tilde{\pi}^{\theta^*}$ is the stationary distribution associated with $\tilde{\mathbf{Q}}^{\theta^*}$. Thus it suffices to prove that for all state vector $u\in\mathsf{U}$, 
\begin{equation}\label{eq_b-}
\int g^{\theta^*}(w\mid u)|\log b_-(w)|\mu^{\otimes T}(dw)<\infty.
\end{equation}
Under Assumptions \ref{hyp-p}-\ref{hyp-lambda}-\ref{hyp-beta}, we get $\sup_w b_-(w)<\infty$. Hence \eqref{eq_b-} holds if there exists some function $C$ such that for all $w\in\mathsf{W}$, $b_-(w)\geq C(w)$ and 
\begin{equation}
\int g^{\theta^*}(w\mid u)(-\log C(w))\mu^{\otimes T}(dw)<\infty.
\end{equation}
For $1\leq i\leq T$, let $c_i(y_i):=\inf_{\theta}\sum_kf^{\theta_Y}_{k,i}(y_i)$ and $C(w) := \prod_{i=1}^Tc_i(y_i)$. Expanding this product, we get, for all $\theta\in\Theta$ and $w=(y_1,\dots,y_T)\in\mathsf{W}$,
$$\prod_{i=1}^T\sum_{k=1}^Kf^{\theta_Y}_{k,i}(y_i)=\sum_{(u_1,\dots,u_T)}\prod_{i=1}^Tf^{\theta_Y}_{u_i,i}(y_i)=\sum_ug^\theta(w\mid u).$$
Therefore, for all $w\in\mathsf{W}$, we have $$C(w)=\prod_{i=1}^T\inf_{\theta}\sum_{k=1}^Kf^{\theta_Y}_{k,i}(y_i) \leq \inf_{\theta}\prod_{i=1}^T\sum_{k=1}^Kf^{\theta_Y}_{k,i}(y_i)=b_-(w).$$
On the other hand, Fubini's theorem yields
$$\int g^{\theta^*}(w\mid u)(-\log C(w))\mu^{\otimes T}(dw) = \sum_{i=1}^T\int f^{\theta^*_Y}_{u_i,i}(y_i)(-\log c_i(y_i))\mu(dy_i).$$
Under Assumptions \ref{hyp-p}-\ref{hyp-lambda}-\ref{hyp-beta}, these integrals are finite. Indeed, $c_i(0)\geq\bar{p}_{\min}$ and for $y_i>0$, $c_i(y_i)\geq \bar{p}_{\min}\frac{\lambda_{\min}}{1+\sigma_{\max}}\exp\left[-\frac{\lambda_{\max}}{1+\sigma_{\min}}y_i\right]$, which ends the proof. 
\end{proof}

The conclusions of Lemma \ref{lem-hyp} are in fact sufficient conditions to apply Theorem 13.14 in \cite{douc2014}. Combining this theorem with \eqref{ident2}, we obtain the strong consistency of $\tilde{\theta}_{J,\pi}$ and then Theorem \ref{mainthm} is proved.

\subsection{Computation of the maximum likelihood estimator}\label{fit}

For given number of states $K$, number of populations $M$ and polynomial degree $d$, we wish to estimate the parameters of the model $\mathcal{M}(K,M,d)$ using maximum likelihood inference. We have already shown the strong consistency of the maximum likelihood estimator. This paragraph deals with its practical computation. Before giving the expression of the likelihood function, let us first introduce an equivalent formulation of the model. Let $\left(X_t^{(2)}\right)_{t\geq 1}$ be random variables such that for all $t\geq 1$, the distribution of $X_t^{(2)}\in\{1,\dots, M\}$ conditionally to $\left(X_s^{(1)}\right)_{s\geq 1}$ only depends on $X_t^{(1)}$. This stochastic process models the choice of a population of the mixture, given the hidden state. We have, for all $t\geq 1$, $k\in\{1,\dots K\}$ and $m\in\{1,\dots,M\}$:
$$\mathbb{P}\left(X_t^{(2)}=m\mid X_t^{(1)}=k\right)=p_{km}.$$
Let $X_t=\left(X_t^{(1)},X_t^{(2)}\right)$. Observe that $(X_t)_{t\geq 1}$ is a Markov chain. Using this new state space, the emission distributions write:
$$Y_t\mid \{X_t=(k,m)\}\sim\left\lbrace
\begin{array}{lr}
\delta_0, & m = 1 \\
\mathcal{E}\left(\frac{\lambda_{km}}{s_k(t)}\right), & m\geq 2
\end{array}\right..$$ 
The corresponding emission densities are:
$$f_{k,m,t}^{\theta_Y}(y) := \left\lbrace\begin{array}{lr}
\mathbf{1}_{y = 0}, & m = 1\\
\frac{\lambda_{km}}{s_k(t)}\exp\left(-\frac{\lambda_{km}}{s_k(t)}y\right)\mathbf{1}_{y>0}, & m\geq 2
\end{array}\right..$$
Assume that we have observed a trajectory of the process $(Y_t)_{t\geq 1}$ with length $n$. Let $X:=(X_1,\dots,X_n)$ and $Y:=(Y_1,...,Y_n)$ and recall that $X$ is not observed. The likelihood function is then
$$L_{n}(\theta;Y)=\sum_{\mathbf{x}^{(1)},\mathbf{x}^{(2)}}\pi_{x^{(1)}_1}p_{x^{(1)}_1,x^{(2)}_1}f^{\theta_Y}_{x^{(1)}_1,x^{(2)}_1,t}(Y_1)\prod_{t=2}^n\mathbf{Q}_{x^{(1)}_{t-1}x^{(1)}_t}p_{x^{(1)}_tx^{(2)}_t}f^{\theta_Y}_{x^{(1)}_t,x^{(2)}_t,t}(Y_t),$$
where $\mathbf{x}^{(i)}=\left(x^{(i)}_1,\dots,x^{(i)}_n\right)$ for $i\in\{1,2\}$ and $\pi$ is the stationary distribution corresponding to $\mathbf{Q}$ (its existence is guaranteed by Assumption \ref{hyp-delta}). As $X$ is not observed, we use the Expectation Maximization (EM) algorithm to find a local maximum of the log-likelihood function. The EM algorithm is a classical algorithm to perform maximum likelihood inference with incomplete data. For any initial distribution $\pi$, we define the \emph{complete} log-likelihood by:

\begin{align*}
\log L_{n,\pi}\left[\theta;(X,Y)\right]:=\log \pi_{X_1^{(1)}}+\sum_{t=1}^{n-1}\log \mathbf{Q}_{X_t^{(1)}X_{t+1}^{(1)}}+\sum_{t=1}^n\log p_{X_t^{(1)}X_t^{(2)}}+\sum_{t=1}^n\log f^{\theta_Y}_{X_t^{(1)},X_t^{(2)},t}\left(Y_t\right).
\end{align*}

The algorithm starts from an initial vector of parameters $(\theta^{(0)},\pi^{(0)})$ and alternates between two steps to build a sequence of parameters $\left(\theta^{(q)},\pi^{(q)}\right)_{q\geq 0}.$
\begin{itemize}
\item The \textbf{E} step is the computation of the \emph{intermediate quantity} defined by:
$$Q\left[\left(\theta,\pi\right),\left(\theta^{(q)},\pi^{(q)}\right)\right]:=\mathbb{E}^{\theta^{(q)},\pi^{(q)}}\left[\log L_{n,\pi}\left(\theta;(X,Y)\right)\mid Y\right].$$
\item The \textbf{M} step consists in finding $\left(\theta^{(q+1)},\pi^{(q+1)}\right)$ maximizing the function $\left(\theta,\pi\right)\mapsto Q\left[\left(\theta,\pi\right),\left(\theta^{(q)},\pi^{(q)}\right)\right]$, or at least increasing it.
\end{itemize}
It can be shown that the sequence of likelihoods $L_n\left(\theta^{(q)};Y\right)$ is increasing and that under regularity conditions, it converges to a local maximum of the likelihood function \citep{wu83}. The \textbf{E} step requires the computation of the following quantities:

\begin{itemize}

\item The \emph{smoothing} probabilities: 
$$\pi_{t\mid n}(k):=\mathbb{P}^{\theta^{(q)},\pi^{(q)}}\left(X_t^{(1)}=k\mid Y\right)$$
 for all $k\in\mathsf{X}$ and $1\leq t\leq n$.
\item The bivariate smoothing probabilities:
$$\pi_{t,t+1 \mid n}^{(q)}(k,l) := \mathbb{P}^{\theta^{(q)},\pi^{(q)}}\left(X_t^{(1)} = k,X_{t+1}^{(1)} = l\mid Y\right)$$ for $k,l\in\mathsf{X}$ and $1\leq t\leq n-1$.
\item $\gamma_t^{(q)}(k,m) := \mathbb{P}^{\theta^{(q)},\pi^{(q)}}\left(X_t = (k,m)\mid Y\right)$. 
\end{itemize}
The computation of the smoothing probabilities can be done efficiently using the \emph{forward-backward} algorithm \citep{rabiner86}. In addition, we have:
$$\gamma_t^{(q)}(k,m) = \pi^{(q)}_{t\mid n}(k)\frac{p^{(q)}_{km}f^{(q)}_{km,t}(Y_t)}{\sum_{m'=1}^Mp^{(q)}_{km'}f^{(q)}_{km',t}(Y_t)},$$
where $\left(p^{(q)}_{km}\right)$ and $f^{(q)}_{km,t}$ refer respectively to the weight matrix and the emission densities corresponding to the parameter $\theta^{(q)}$. Then, the intermediate quantity is given by the following formula:
\begin{equation*}
\begin{split}
Q\left[\left(\theta,\pi\right),\left(\theta^{(q)},\pi^{(q)}\right)\right]&=\mathbb{E}^{\theta^{(q)},\pi^{(q)}}\left[\log L_{n,\pi}\left(\theta;(X,Y)\right)\mid Y\right]\\
 &= \sum_{k=1}^K\pi_{1\mid n}^{(q)}(k)\log\pi_k\\
 &+ \sum_{t=1}^{n-1}\sum_{k=1}^K\sum_{l=1}^K\pi_{t,t+1\mid n}^{(q)}(k,l)\log \mathbf{Q}(k,l)\\
 &+ \sum_{t=1}^n\sum_{k=1}^K\sum_{m=1}^M\gamma_t^{(q)}(k,m)\log p_{km}\\
 &+ \sum_{t=1}^n\sum_{k=1}^K\sum_{m=1}^M\gamma_t^{(q)}(k,m)\log f^{\theta_Y}_{km,t}(Y_t).
\end{split}
\end{equation*}
Each of these four terms can be maximized separately, and we can perform the maximization state by state. The constrained optimization for the first three terms yields:
\begin{align*}
\pi_k^{(q+1)} &= \pi_{1\mid n}^{(q)},\quad 1\leq k\leq K\\
\mathbf{Q}^{(q+1)}(k,l) &= \frac{\sum_{t=1}^{n-1}\pi^{(q)}_{t,t+1\mid n}(k,l)}{\sum_{t=1}^{n-1}\pi^{(q)}_{t\mid n}(k)},\quad 1\leq k,l\leq K\\
p_{km}^{(q+1)} &=\frac{\sum_{t=1}^n\gamma^{(q)}_t(k,m)}{\sum_{t=1}^n\pi^{(q)}_{t\mid n}(k)},\quad 1\leq k\leq K, 1\leq m\leq M.
\end{align*}
We then have to maximize with respect to $\theta_Y$, for each $k$:
\begin{align*}
\sum_{t=1}^n\sum_{m=1}^M\gamma^{(q)}_t(k,m)\log f^{\theta_Y}_{km}(Y_t) = \sum_{t=1}^n\sum_{m=2}^M\gamma_t(k,m)\log f^{\theta_Y}_{km}(Y_t)\\
 = \sum_{t=1}^n\sum_{m=2}^M\gamma^{(q)}_t(k,m)\left(\log\lambda_{km} - \log (1+Z(t)\beta_k) -\frac{\lambda_{km}}{1+Z(t)\beta_k}Y_t\right),\\
\end{align*}
under the constraints $\lambda_{km}>0$ and $1+Z(t)\beta_k>0$. This requires to use one of the many numerical optimization algorithms, keeping in mind that the objective function is not convex. We alternate the two steps of the EM algorithm until we reach a stopping criterion. For example, we can stop the algorithm when the relative difference $\frac{L_{n,\pi^{(q)}}(\theta^{(q+1)};Y)-L_{n,\pi^{(q)}}(\theta^{(q)};Y)}{L_{n,\pi^{(q)}}(\theta^{(q)};Y)}$ drops below some threshold $\varepsilon$. The last computed term of the sequence $\left(\theta^{(q)}\right)_{q\geq 0}$ is then an approximation of the maximum likelihood estimator. However, if the EM algorithm does converge, it only guarantees that the limit is a \emph{local} maximum of the likelihood function, which may not be global. Therefore it is a common practice to run the algorithm a large number of times, starting from different (e.g. randomly chosen) initial points and select the parameter with the largest likelihood. In \cite{biernacki2003}, the authors compare several procedures to initialize the EM algorithm, using variants such as SEM \citep{broniatowski1983}. Introducing randomness in the EM algorithm provides a way to escape from local maxima.

\paragraph{Remark} In the computation of the conditional expectancy, we should only sum over the pairs $(t,m)$ such that $\gamma_t(k,m)>0$. Notice that provided that the $p_{km}$ are positive, we have $\gamma_t(k,m)=0$ if and only if $f^{\theta_Y}_{km,t}(Y_t)=0$. Therefore, we sum over all the indices, using the convention $0\times(-\infty)=0$ in the case where $\gamma_t(k,m)=f^{\theta_Y}_{km,t}(Y_t)=0$.

\section{Application to rainfall data}

\subsection{Data}

We use rainfall data from twelve meteorological stations across Germany (see Figure \ref{map}). \\

\begin{figure}
\fbox{\includegraphics[scale=0.3]{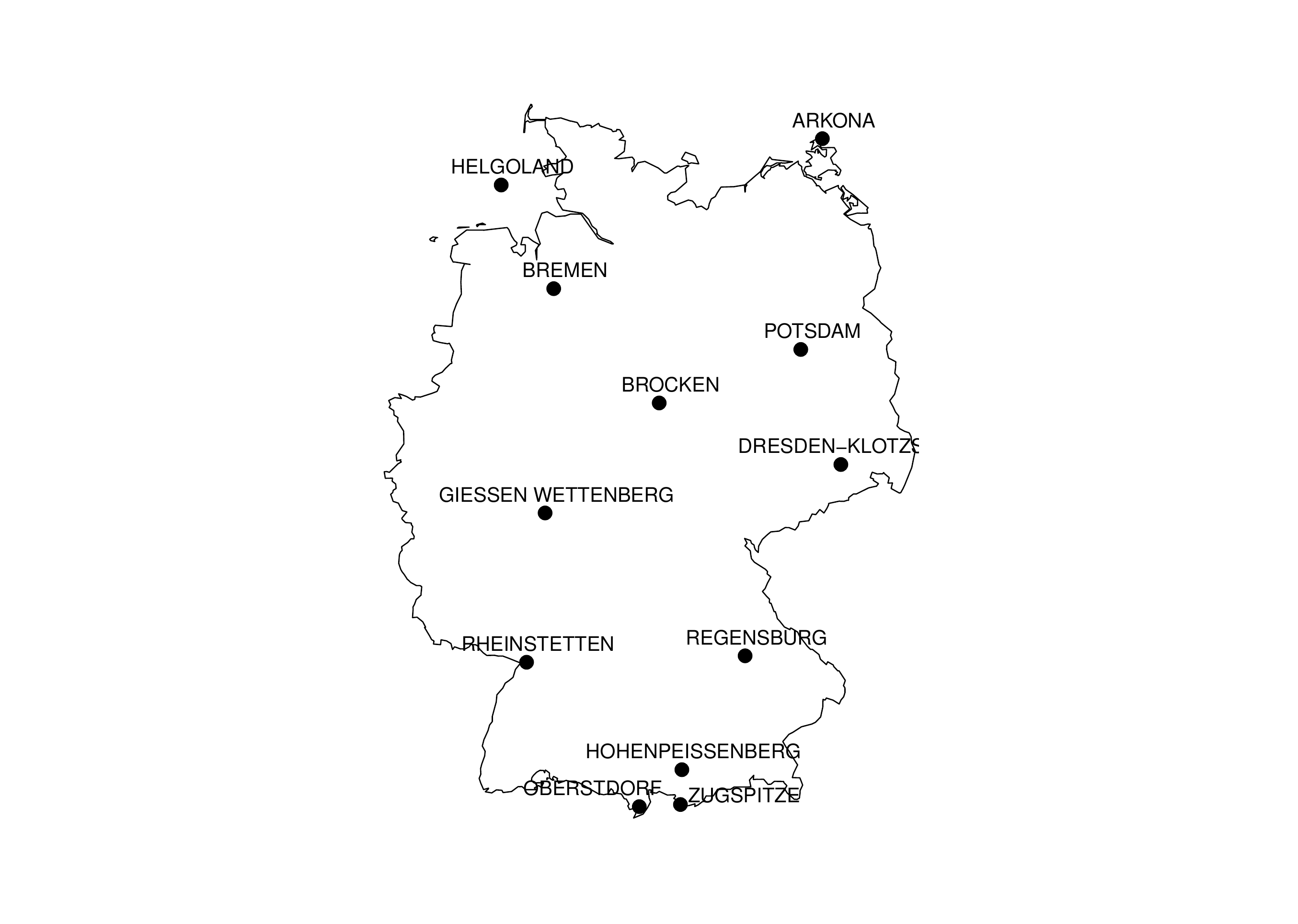}}
\caption{Locations of the stations in Germany}\label{map}
\end{figure}

The source of the data is the \emph{European Climate Assessment and Dataset} (ECA\&D project: \url{http://www.ecad.eu}). For each station, the data consists in daily rainfalls measurements from 01/01/1950 to 12/31/2015. We remove the 16 February 29 so that every year of the period of observation has 365 days. Thus there are 24090 data points left. Missing data are replaced by drawing at random a value among those corresponding to the same day of the year.\\

Table \ref{precip_moy_ann_sta} contains basic statistics for these stations. For most of these stations, the average annual rainfall lies around $600/700$ millimeters, with an average rainfall on rainy days around $3$ to $4$ millimeters. However, four stations (Hohenpeissenberg, Zugspitze, Brocken and Oberstdorf) have much higher precipitations. These stations are located in mountains, which explains why their climate is different from that of the other stations.\\

\begin{table}
\centering
\begin{tabular}{|cccc|}
  \hline
 Station & \parbox[t]{3cm}{\centering Average annual \\rainfall. (mm)} & \parbox[t]{3cm}{\centering Proportion of \\rainy days} & \parbox[t]{3cm}{\centering Average rainfall of rainy days (mm)}\\
  \hline
BREMEN & 699.26 & 0.53 & 3.60 \\ 
  HOHENPEISSENBERG & \textbf{1175.09} & 0.52 & \textbf{6.18} \\ 
  POTSDAM & 590.73 & 0.47 & 3.41 \\ 
  ZUGSPITZE & \textbf{2013.49} & \textbf{0.60} & \textbf{9.18} \\ 
  HELGOLAND & 734.73 & 0.53 & 3.78 \\ 
  DRESDEN-KLOTZSCHE & 661.91 & 0.49 & 3.70 \\ 
  BROCKEN & \textbf{1729.27} & \textbf{0.72} & \textbf{6.56} \\ 
  RHEINSTETTEN & 809.45 & 0.46 & 4.84 \\ 
  GIESSEN WETTENBERG & 635.04 & 0.48 & 3.62 \\ 
  ARKONA & 548.08 & 0.46 & 3.24 \\ 
  OBERSTDORF & \textbf{1765.25} & 0.55 & \textbf{8.84} \\ 
  REGENSBURG & 650.08 & 0.47 & 3.75 \\    \hline
\end{tabular}
\caption{Mean rainfalls by station}\label{precip_moy_ann_sta}
\end{table}

Although the observation space of the model we introduced in Section \ref{model} is continuous, real world observations are not because the precision of the measurements is finite. In our case, this precision is $0.1$mm.

\subsection{Discretization}

The discreteness of the space of observations would not be much of a problem if the order of magnitude of the precipitations were higher than that of the precision. The continuous model would remain a good approximation. However, this is not the case because a good proportion of the precipitation values are comparable to the precision. For example, around $63\%$ (this figure can vary according to the stations) of the measurements are lower than $1$mm. Therefore, it seems necessary to account for the discreteness of the data.\\

We will consider that we do not observe $Y_t$ but
$$\tilde{Y}_t := 0.1\left\lfloor 10Y_t\right\rfloor,$$
where $\left\lfloor \cdot \right\rfloor$ refers to the floor function. Hence, the density of $\tilde{Y}_t$ (with respect to the counting measure on $\{0.1j, j\geq 0\}$) in state $(k,m)$ is given by

$$f_{km,t}(\tilde{y}) =\left\lbrace\begin{array}{lr}
\mathbf{1}_{\tilde{y} = 0}, & m = 1\\
\left(1 - \exp\left[-\frac{0.1\lambda_{km}}{s_k(t)}\right]\right)\exp\left(-\frac{\lambda_{km}\tilde{y}}{s_k(t)}\right)\mathbf{1}_{\tilde{y}>0}, & m\geq 2
\end{array}\right..$$

In other words,

$$10\tilde{Y}_t\mid (X_t^{(1)}=k)\sim p_{k1}\delta_0+\sum_{m=2}^Mp_{km}\mathcal{G}\left(1-\exp\left(-\frac{0.1\lambda_{km}}{s_k(t)}\right)\right),$$
where $\mathcal{G}(\alpha)$ is the geometric distribution with parameter $\alpha$, defined, for $k\in\mathbb{N}$, by $\mathbb{P}(\{k\})=\alpha (1-\alpha)^k$. The consistency result given in Section \ref{model} still holds, the proofs follow the same lines.\\

The results presented in the rest of this paper relate to the discretized model.

\subsection{Results}

The following results correspond to the station of Bremen and the hyperparameters $K=4$, $M=3$ and $d=2$, meaning that there are four hidden states, a mixture of a Dirac mass and two exponential distributions, and seasonalities are trigonometric polynomials with degree $2$. The problem of the choice of $K$, $M$ and $d$ will not be adressed in this paper, although it will be discussed briefly in our conclusion. In this case, we chose $K$, $M$ and $d$ by trying several combinations. The period of the seasonality being one year, $T$ is set to $365$. The parameters of the model have been estimated using the EM algorithm described in paragraph \ref{fit}. We ran it using $40$ different randomly chosen initializations.\\

The (rounded to $10^{-2}$) estimated parameters for the transition matrix $\mathbf{Q}$, the weights matrix $\mathbf{p}$ and the parameters of the exponentials $\boldsymbol{\Lambda}$ are:

$$\hat{\mathbf{Q}} = \begin{bmatrix}
0.71 & 0.12 & 0.13 & 0.04 \\ 
  0.01 & 0.40 & 0.42 & 0.17 \\ 
  0.20 & 0.20 & 0.46 & 0.15 \\ 
  <0.01 & 0.23 & 0.15 & 0.62 \\ 
\end{bmatrix}$$

$$\hat{\mathbf{p}} = \begin{bmatrix}
0.96 & 0.00 & 0.04 \\ 
  <0.01 & 0.19 & 0.81 \\ 
  0.42 & 0.20 & 0.38 \\ 
  <0.01 & 0.19 & 0.81 \\
\end{bmatrix},\quad\hat{\boldsymbol{\Lambda}}=\begin{bmatrix}
0.20 & 0.20 \\ 
2.30 & 0.41 \\ 
2.21 & 13.65 \\ 
0.19 & 0.18 \\ 

\end{bmatrix}$$
The stationary distribution associated with $\hat{\mathbf{Q}}$ is $\hat{\pi}=\begin{bmatrix}
0.21 & 0.24 & 0.3 & 0.24
\end{bmatrix}$, which shows that the four states are well distributed. The corresponding transition diagram is presented in Figure \ref{trans-diag}. The first column of the matrix $\hat{\mathbf{p}}$ corresponds to the probabilities of observing a dry day in each state.  The coefficients of the matrix $\hat{\boldsymbol{\Lambda}}$ give insight into the precipitation intensity in each state. The seasonalities $s_k(t)$ for each state $k$ and $1\leq t\leq 365$ are represented in Figure \ref{saisons}. By examining the estimated parameters, one can give meaning to the states. For example, state $1$ is mostly dry, but when it is not, the rainfalls are heavy in summer, and light in winter. On the other hand, state $4$ is mostly rainy with, on average, heavy rainfalls. The seasonality in this state has a low amplitude.\\

\begin{figure}
\centering
\begin{tikzpicture}[thick,scale=0.6, every node/.style={transform shape}]
\node[state, line width = 0.5mm] (1) at (0,0) {State 1};
\node[state, line width = 0.5mm] (2) at (6,0) {State 2};
\node[state, line width = 0.5mm] (4) at (0,-6) {State 4};
\node[state, line width = 0.5mm] (3) at (6,-6) {State 3};
 \draw[every loop, => latex, minimum width = 0mm]
 			(1) edge[loop above, line width = 0.71mm] node {0.71} (1)
			    edge[bend right=15, auto = left,line width = 0.12mm] node {0.12} (2) 			
			    edge[bend right=15, auto = right,line width = 0.04mm] node {0.04} (4)
			    edge[bend right = 12, auto = right, line width = 0.13mm] node {0.13} (3)
 			(2) edge[loop above, line width = 0.40mm] node {0.40} (2)
                edge[bend right=15, auto = left, line width = 0.42mm] node {0.42} (3)
                edge[bend right=15, auto = right, line width = 0.01mm] node {0.01} (1)
			    edge[ auto = right, bend right = 30, line width = 0.17mm] node {0.17} (4)
            (3) edge[bend right=15, auto = right,line width = 0.2mm] node {0.20} (2)
                edge[loop below, line width = 0.46mm] node {0.46} (3)            
                edge[bend right = 15, auto = left, line width = 0.15mm] node {0.15} (4)            
			    edge[bend right = 12, auto = right, line width = 0.2mm] node {0.20} (1)
            (4) edge[loop below, line width = 0.62mm] node {0.62} (4)			
			    edge[bend right = 15, auto = right, line width = 0.15mm] node {0.15} (3)
			    edge[bend right=15, auto = left, dashed] node {$\simeq 0$} (1)
			    edge[ auto = right, bend right = 30, line width = 0.23mm] node {0.23} (2);
\end{tikzpicture}
\caption{Estimated state transition diagram}\label{trans-diag}
\end{figure}
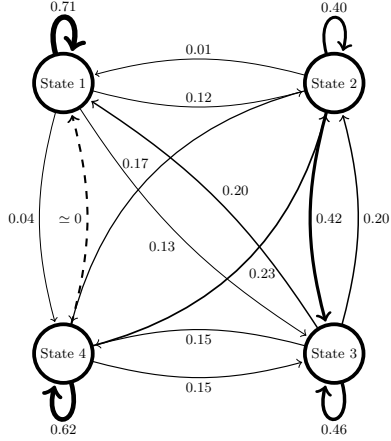

\begin{figure}
\centering
\includegraphics[scale=0.8]{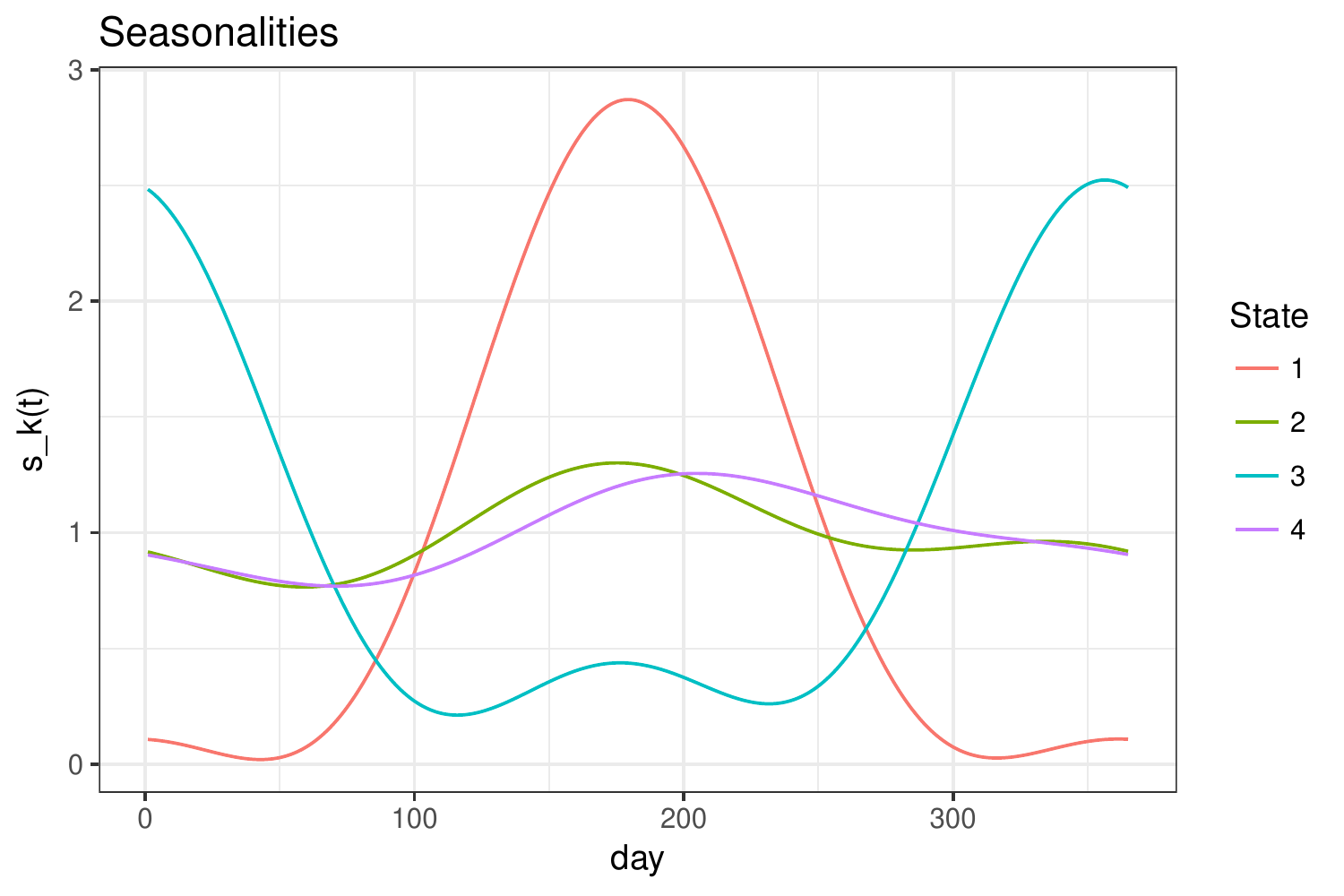}
\caption{Seasonalities $s_k(t)$}\label{saisons}
\end{figure}

Once the estimation of the parameters is done, it can be useful to compute the most probable state at each time step using the forward-backward algorithm. As it outputs the smoothing distributions, we can just use the \emph{maximum a posteriori} rule.
However, this sequence of states is not necessarily the most likely one. The most likely sequence of states is given by the Viterbi algorithm \citep{viterbi1967}. 
Besides the analysis of the parameters, this can provide another way to give an interpretation to the states. Yet, one must keep in mind that, whether using the maximum a posteriori or Viterbi, the estimated states can be different from the real states. Simulations show that this is particularly the case when the emission distributions are not well separated.

\subsection{Validation of the model}

Recall that our goal is to produce realistic time series of precipitations for a given site. Thus, in order to validate the model, we simulate according to the model, using the estimated parameters, and compare those simulations to the observed time series. To be specific, $1000$ independent simulations are produced, each of them having the same length as the observed series. To perform a simulation, we first simulate a Markov chain $\left(X_t^\mathrm{sim}\right)_{t\geq 1}$ with initial distribution $\hat{\pi}$ and transition matrix~$\hat{\mathbf{Q}}$. Then we simulate the observation process $\left(Y_t^\mathrm{sim}\right)_{t\geq 1}$ using the estimated densities $f_{X_t^\mathrm{sim},t}^{\hat{\theta}_Y}$.\\

Several criteria are considered to carry out the comparison: daily statistics (moments, quantiles, maxima, rainfall occurrence), overall distribution of precipitations, distribution of annual maximum, interannual variability, distribution of the length of dry and wet spells. Each of these statistics is computed from the simulations, which provides an approximation of the distribution of the quantity of interest under the law of the generator (in other words, we use \emph{parametric bootstrap}), hence a $95\%$ prediction interval.\\

Let us first compare the overall distributions of simulated and observed precipitations. Figure \ref{qqplot} depicts the quantile-quantile plot of these two distributions. The match is correct, except in the upper tail of the distribution. The last point corresponds to the maximum of the simulated values, which is much larger than the maximum observed value. This should not be considered as a problem: a good weather generator should be able to (sometimes) generate values that are larger than those observed.\\

\begin{figure}
\centering
\includegraphics[scale=0.5]{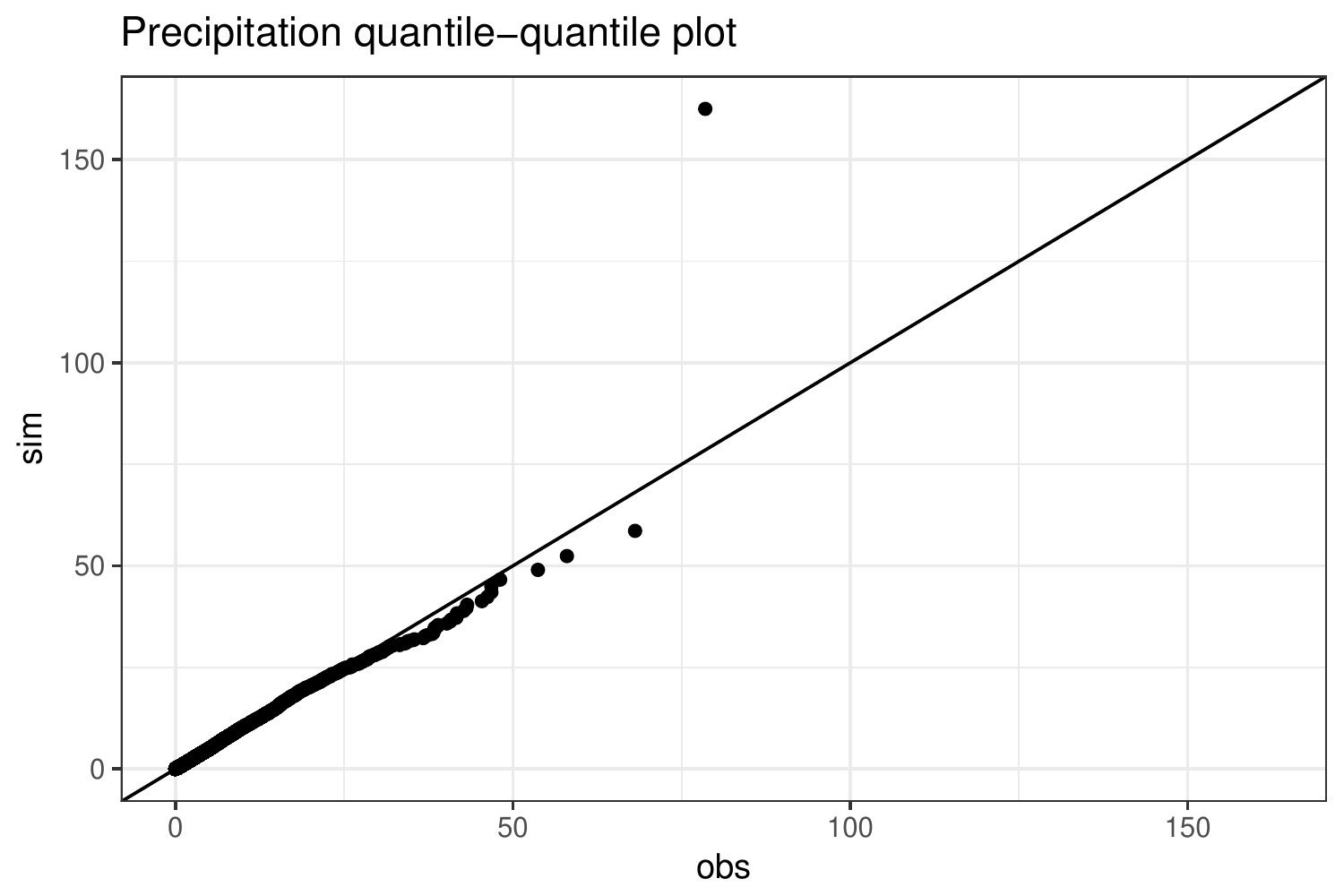}
\caption{Quantile-quantile plot of observed versus simulated precipitations distributions}\label{qqplot}
\end{figure}

We then focus on daily distributions. Figure \ref{moments} shows the results obtained for the first four daily moments and for the daily frequency of rainfall. It shows that these statistics are well reproduced by the model. Even though we did not introduce seasonal coefficients in the weights of the mixture densities, a seasonality appears in the simulated occurrence process. This can be explained by the combined effects of discretization and seasonality in the intensity of rainfall (i.e. $s_k(t)$). For example, Figure \ref{saisons} shows that state $1$ will produce lots of small values in winter. All the values smaller than $0.1$ will be set to $0$ by discretization, hence a higher proportion of dry days in state $1$ in winter. The same phenomenon is observable in state $3$ in summer.\\ 

\begin{figure}
\centering
\includegraphics[scale=0.5]{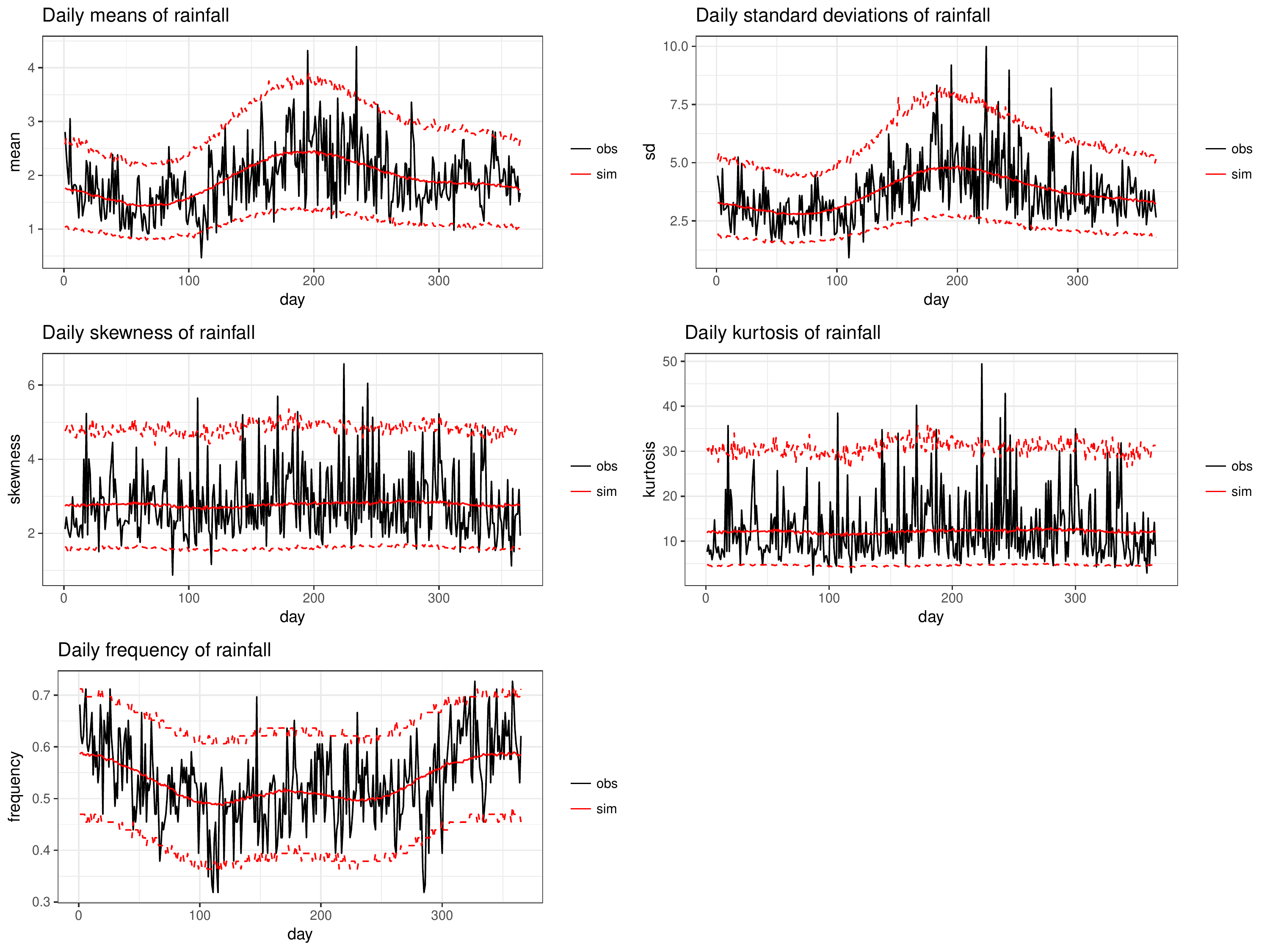}
\caption{Daily moments and frequency of precipitations. The black line relates to observations, the red solid line is the mean over all simulations, and the dashed lines depict an estimated $95\%$ prediction interval under the model.}\label{moments}
\end{figure}

We can also compute, according to the same principles, the daily quantiles (see Figure \ref{quantiles}). Once again, the match is correct, even for the highest quantiles. Note that we did note consider lower quantiles as they would be $0$ because of the high proportion of dry days.\\

\begin{figure}
\centering
\includegraphics[scale=0.5]{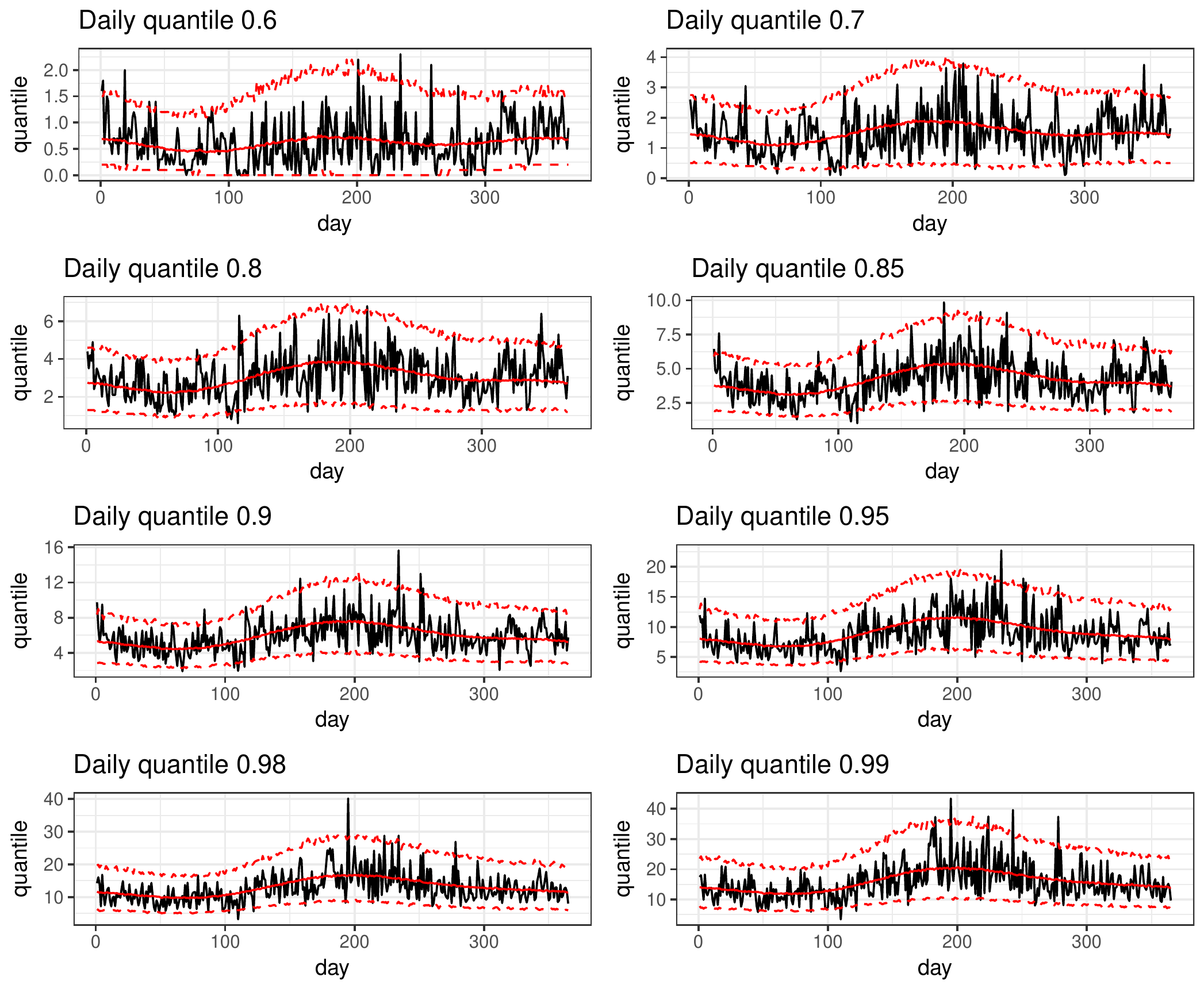}
\caption{Daily quantiles of precipitations. The black line relates to the observations, the red solid line is the mean over all simulations, and the dashed lines depict an estimated $95\%$ prediction interval under the model.}\label{quantiles}
\end{figure}

We also investigated daily maxima and annual maxima. By daily maxima we mean that for each day of the year, we consider the record of precipitation for that day during the observation period (for example, $22.1$mm for January 1st). Regarding daily maxima, we are interested in the distribution of the yearly maximum of precipitations. The results are presented in Figure \ref{max}. Note that the intervals materialized by the dashed lines are just intervals within which $95\%$ of the simulated values can be found. The maximum simulated values are larger than the observed ones.\\

\begin{figure}
\centering
\includegraphics[scale=0.5]{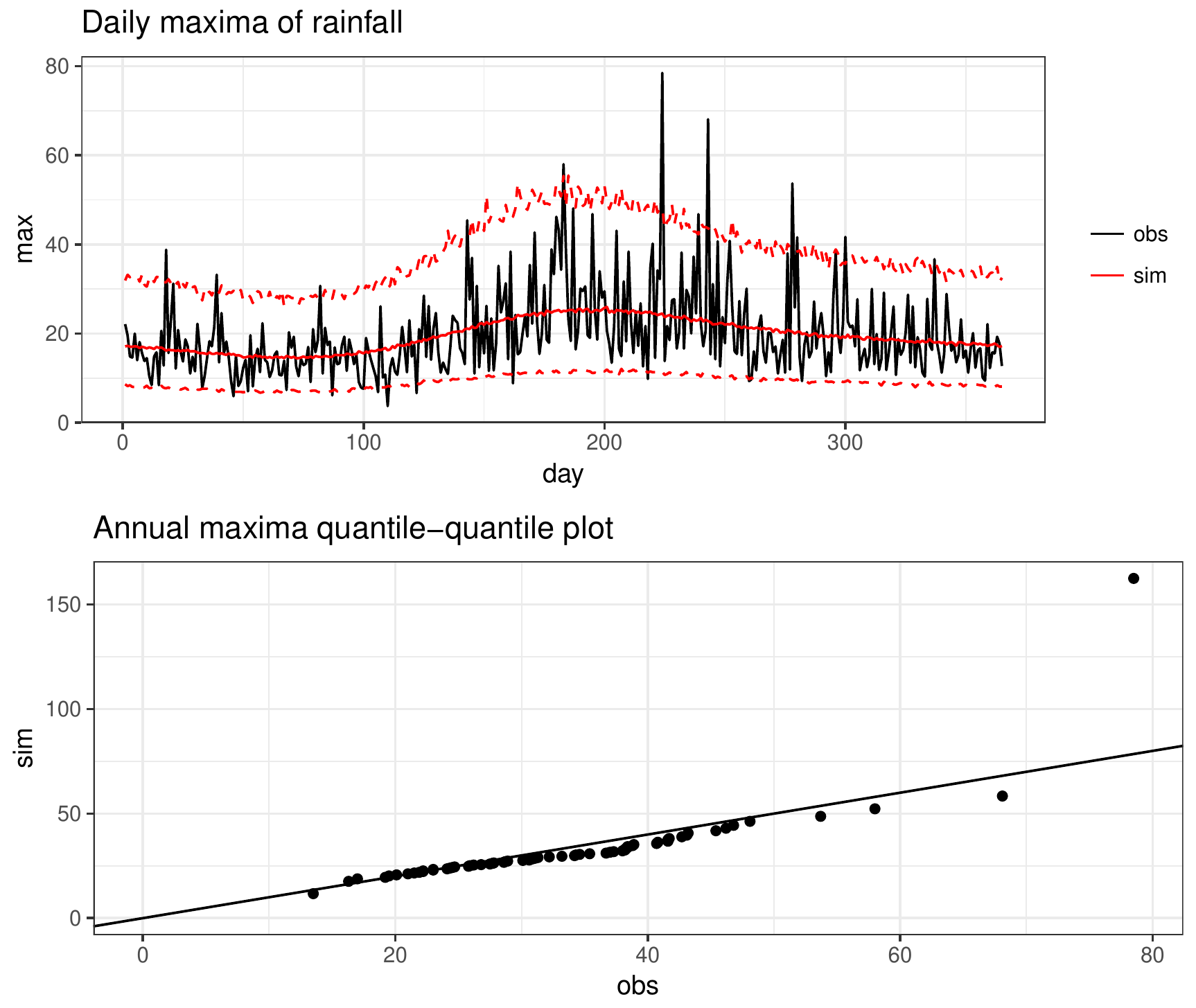}
\caption{Observed versus simulated daily maxima (top graph) and yearly maxima (bottom graph)}\label{max}
\end{figure}

The distribution of the duration of \emph{dry and wet spells} is another quantity of interest that we studied. A wet (resp. dry) spell is a set of consecutive rainy (resp. dry) days. This statistic provides a way to measure the time dependence of the occurrence process. The results are presented in Figure \ref{spells}. The dry spells are well modelled, whereas there is a slight underestimation of the frequency of 2-day wet spells while the single day events frequency is slightly overestimated.\\

\begin{figure}
\centering
\includegraphics[scale=0.5]{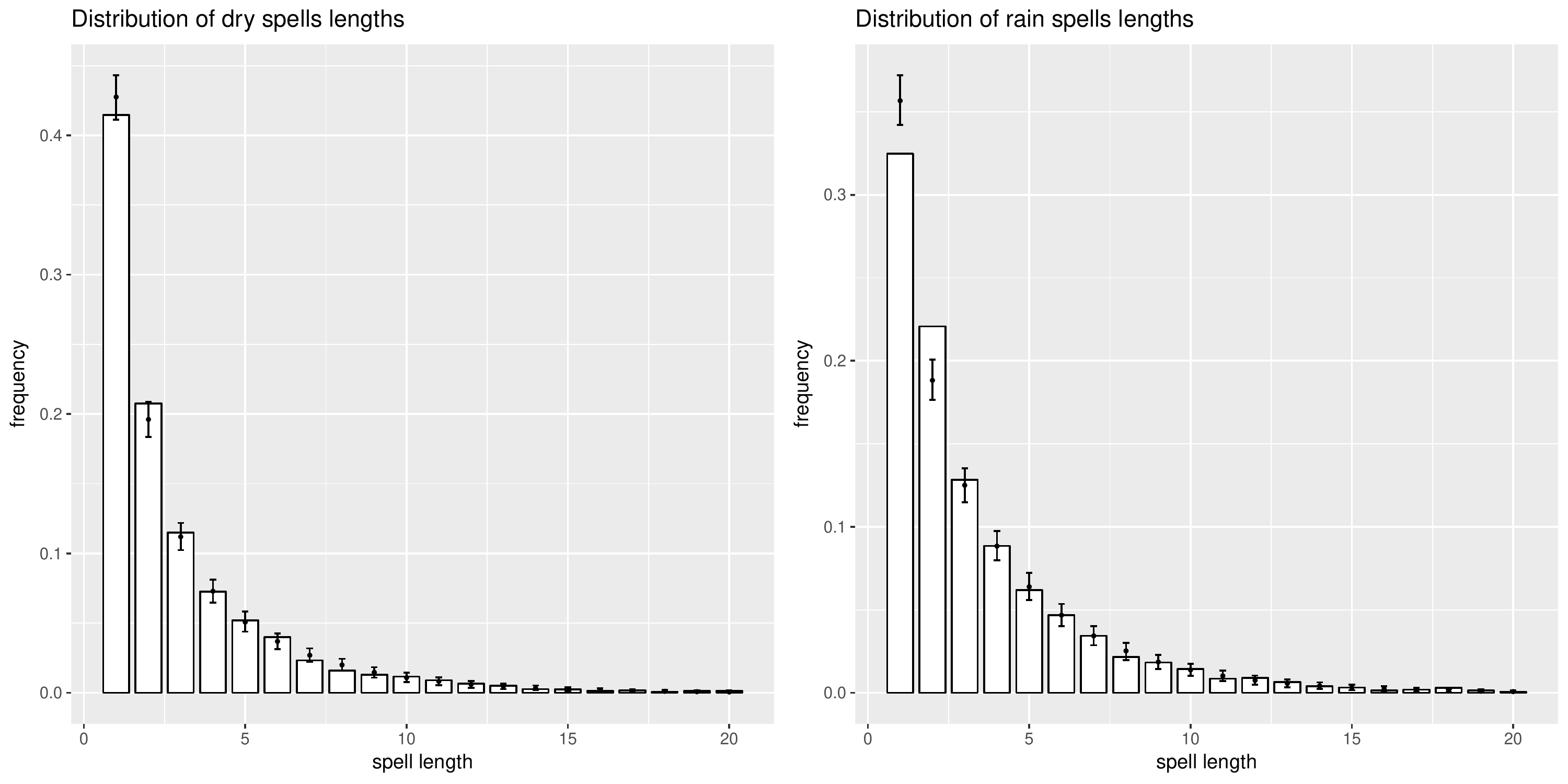}
\caption{Distribution of the lengths of dry (left plot) and wet (right plot) spells: observed (bars) versus simulated (error bars). The dots represent the means of the simulations}\label{spells}
\end{figure}

Stochastic precipitations generators often underestimate the \textit{interannual variability} of precipitations \citep{Katz1998}. We focus on the total yearly/monthly  total precipitations and we look at their interannual variability. The histogram in Figure \ref{annual} is the observed distribution of yearly total precipitations (that is $66$ observations). The line is a kernel density estimation of simulated yearly precipitations.  

\begin{figure}
\centering
\includegraphics[scale=0.3]{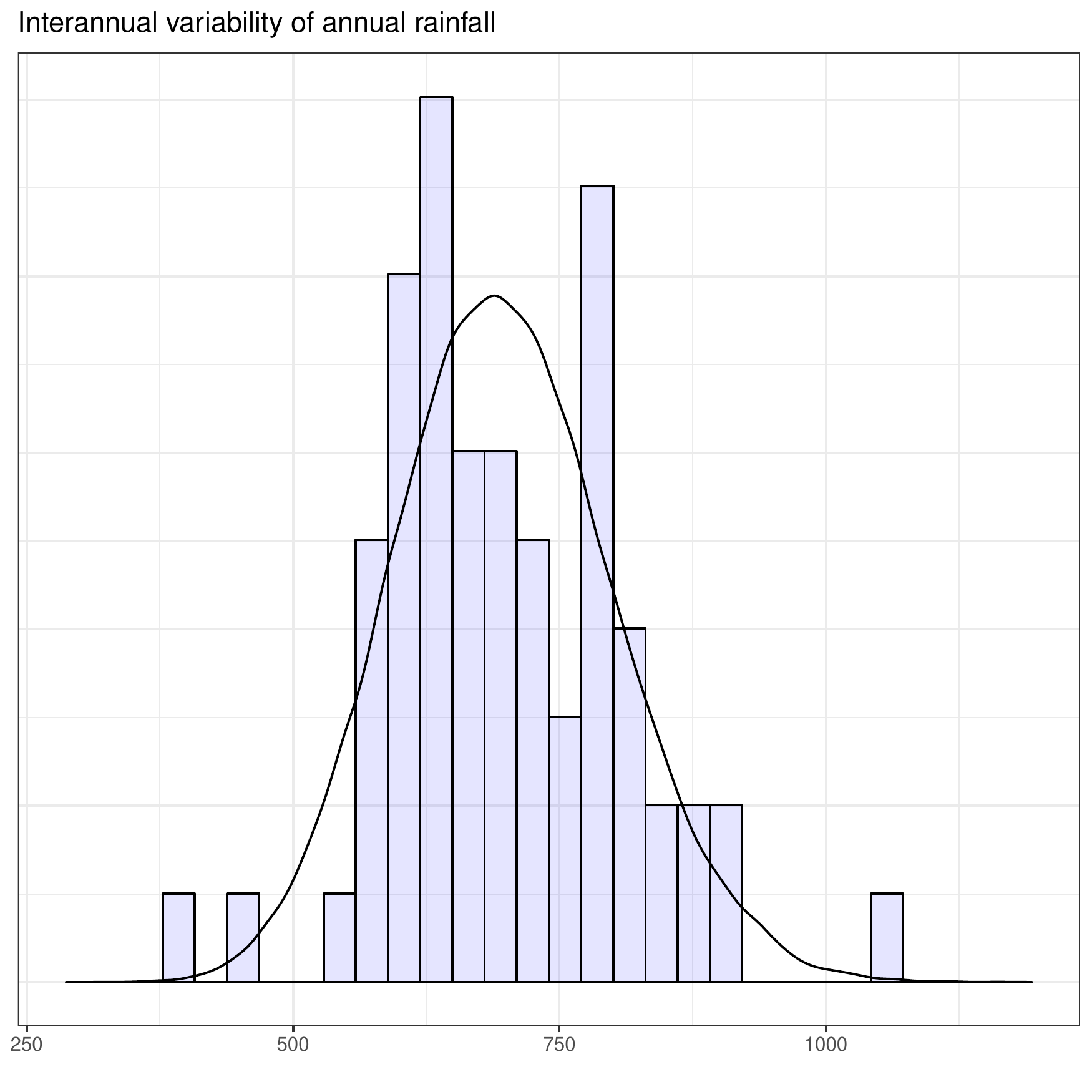}
\caption{Interannual variability of total yearly precipitations: observed (histogram) and simulated (line)}\label{annual}
\end{figure}

We have performed the same computations on monthly precipitations (see Figure \ref{month}). Our model does not underestimate interannual variability.

\begin{figure}
\centering
\includegraphics[scale=0.5]{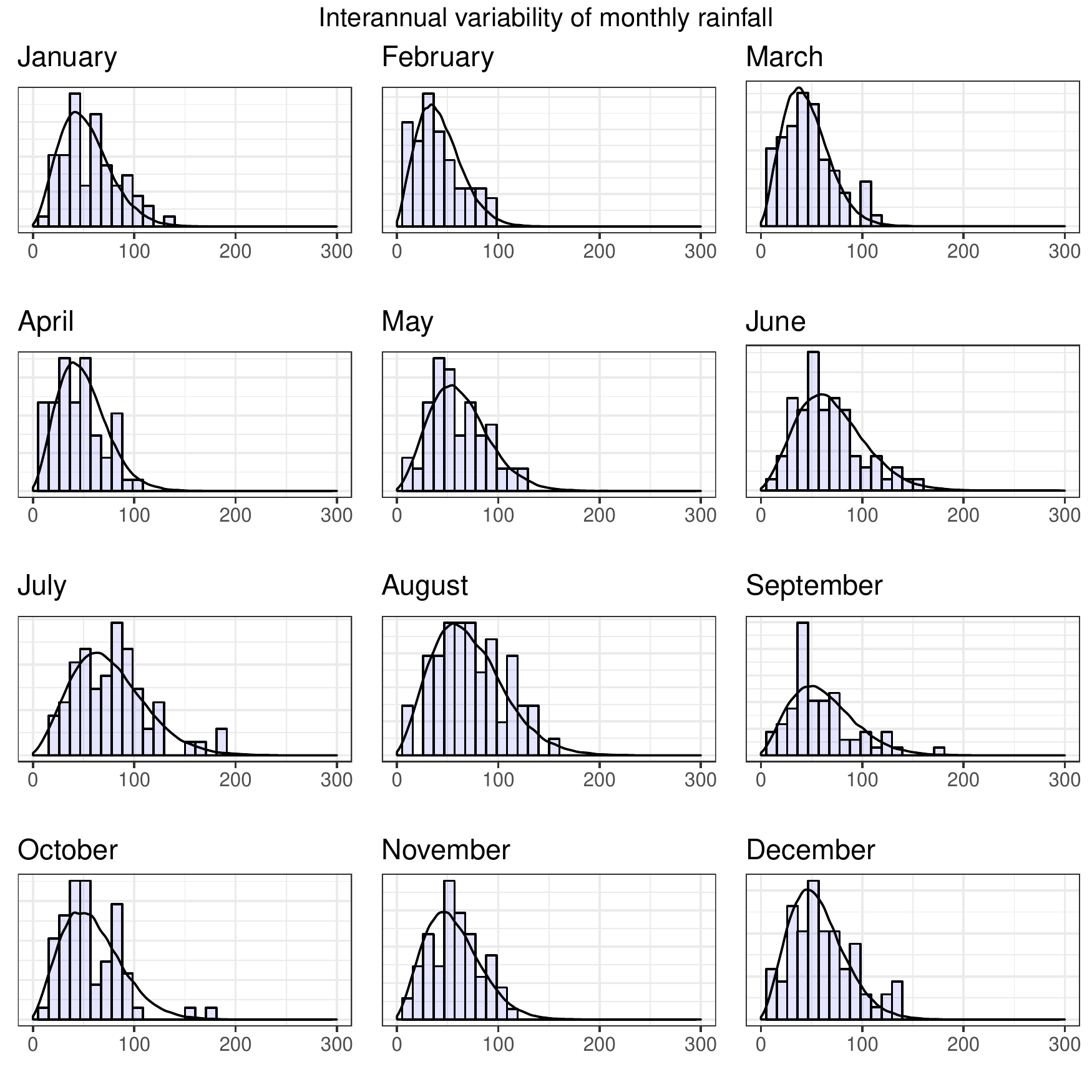}
\caption{Interannual variability of total monthly precipitations: observed (histogram) and simulated (line)}\label{month}
\end{figure}

\paragraph{}
This validation procedure was used on all of the twelve studied stations. For eight of them, the validation results are comparable to those of Bremen. However, four stations give less satisfying results: Hohenpeissenberg, Oberstdorf, Zugspitze and Brocken. The first three are located in the same mountain area of Bavaria (Zugspitze being the highest summit in Germany). The station of Brocken is located at the summit of a mountain (highest peak in Northern Germany) and is well-known for its extreme climatic conditions (see \url{http://www.brocken.climatemps.com/}) and its subarctic climate (i.e. cold winters, no dry season and short, cool summers). Thus, the climate of these stations strongly differs from the others. This suggests that another model should be used for these stations.

\paragraph{Impact of discretization}

We compared the previous results with those obtained when no discretization of the model is performed. The most notable difference lies in the third state, which corresponds to a low intensity of precipitations. The proportion of dry days rises from $0.42$ to $0.7$ and the shape of the seasonality is affected. One also observe some changes in the transition matrix. Validation results are also better with the discretized model.

\section{Conclusion and future work}


\paragraph{}In this paper, we introduced a new stochastic weather generator for daily rainfall. It consists in a seasonal version of a hidden Markov model. The seasonality of precipitation time series is modelled by using periodic coefficients in the emission distributions. Those parameters, along with the transition matrix, were estimated through maximum likelihood inference. Using methods developed for hidden Markov models, we proved that this model is identifiable and that under reasonable assumptions, the maximum likelihood estimator is strongly consistent. The performance of the model has been assessed by comparing simulations to the observations, our goal being the simulation of realistic time series. For the considered criteria, the model gave good results, which proves that the simulated times series share many statistical properties with the observed data. By analyzing the estimated parameters, we also gave physical interpretations to the different states.

\paragraph{}This model could be explored and developed in several ways. 

\begin{itemize}
\item Other choices of emission distributions could be made. We checked that replacing the mixtures of two exponential distributions by a single gamma distribution (in each state) also leads to good results. Contrary to (mixtures of) exponential distributions, gamma distributions have the advantage of being able to have a positive mode, which can be useful in some cases. Mixtures of gamma distributions can also be considered to increase flexibility. On the other hand it decreases parcimony. However, these are light tail distributions. When interested in extreme values of precipitations, it may be necessary to introduce a heavy tail distribution such as the generalized Pareto distribution \citep{lennartsson2008}.

\item Although we did not model directly the seasonality of the occurrence of precipitations, we obtained it as a by-product of discretization. Yet it is possible to introduce seasonality in the weights of the mixtures, for example by using the logistic function applied to a trigonometric polynomial depending on the state. Although it increases the complexity of the model, it is more realistic, as the seasonality in the occurrence of precipitations clearly appears when looking at the data. We actually fitted such a model to the same data, using gamma emission distributions. The validation procedure also leads to good results, yet not much better than those of the original model.

\item In this model, the seasonal variations of the distribution of precipitations are included in the emission distributions, the underlying Markov chain being homogeneous. It is also possible to consider a non-homogeneous hidden Markov chain by replacing the constant transition matrix $\mathbf{Q}$ by a periodic function of time $\mathbf{Q}(t)$, at the cost of an increased number of parameters.

\item We proved a consistency result for cyclo-stationary processes: it does not apply to other sources of non-stationarity, such as long term trends that can be observed on many meteorological time series. Here we had checked first that it was not necessary to include a trend. However, when studying other climates or other variables, it can be necessary.
\end{itemize}

\paragraph{}We did not adress the issue of \emph{model selection}. We have three hyper-parameters to choose: the number of states $K$, the complexity of the emission distributions $M$, and the degree of the seasonality $d$. The number of parameters of the model is quadratic in $K$ and linear in $M$. Thus it might be intractable if we do not choose the hyper-parameters with parcimony, because of an unrealistic computation time and a large number of local maxima in the likelihood function. Furthermore, from an applied point of view, a large number of states often means a loss in interpretability. On the other hand, a too small number of parameters does not allow a good fit to the data. The problem of choosing the number of states when fitting a HMM is challenging. Although not fully justified in theory in the framework of HMM, the \emph{Bayesian Information Criterion} \citep{schwarz1978} is very popular for this purpose. The BIC is a penalized likelihood criterion, which realizes a trade-off between fitting the data and being parcimonious. Another approach is cross-validated likelihood \citep{celeux2008}, even though it is computationnally intensive.  In \cite{lehericy2016order}, the author introduces a penalized least square estimator for the order of a nonparametric HMM and proves its consistency. He also estimates the number of states by thresholding the spectrum of the empirical version of the matrix $\mathbf{N}$ in the spectral algorithm (see equation \eqref{eq3}) and shows that this procedure leads to a consistent estimator of $K$. In addition, he proves results on consistent estimation of $K$ using penalized maximum likelihood with a BIC-like penalty \citep{lehericy2017}. However, when dealing with real world data, other considerations should be taken into account, such as interpretability of the states, computing time, or the ability of the model to reproduce some behaviour of the data, as explained in \cite{bellone2000}. Indeed, according to \cite{pohle2017}, the popular AIC (\emph{Akaike Information Criterion}) and BIC, as well as other penalized criteria, tend to overestimate the number of states as soon as the data generating process differs from a HMM (e.g. the presence of a conditional dependence). Hence it is advised to use such a criterion as a guide, without following it blindly. 

\section*{Acknowledgements}
The author would like to thank Yohann De Castro, \'Elisabeth Gassiat, Sylvain Le Corff and Luc Leh\'ericy from Universit\'e Paris-Sud for fruitful discussions and valuable suggestions. This work is supported by EDF. We are grateful to Thi-Thu-Huong Hoang and Sylvie Parey from EDF R\&D for providing this subject and for their useful advice. 

\section*{Supplementary material}

The dataset used in the paper is available at \url{https://www.math.u-psud.fr/~touron/data}.

\bibliographystyle{imsart-nameyear}
\bibliography{bibli}

\begin{thebibliography}{30}

\bibitem[\protect\citeauthoryear{Ailliot, Thompson and
  Thomson}{2009}]{ailliot2009}
\begin{barticle}[author]
\bauthor{\bsnm{Ailliot},~\bfnm{Pierre}\binits{P.}},
  \bauthor{\bsnm{Thompson},~\bfnm{Craig}\binits{C.}} \AND
  \bauthor{\bsnm{Thomson},~\bfnm{Peter}\binits{P.}}
(\byear{2009}).
\btitle{Space--time modelling of precipitation by using a hidden {M}arkov model
  and censored Gaussian distributions}.
\bjournal{Journal of the Royal Statistical Society: Series C (Applied
  Statistics)}
\bvolume{58}
\bpages{405--426}.
\end{barticle}
\endbibitem

\bibitem[\protect\citeauthoryear{Ailliot et~al.}{2015}]{Ailliot2015}
\begin{barticle}[author]
\bauthor{\bsnm{Ailliot},~\bfnm{Pierre}\binits{P.}},
  \bauthor{\bsnm{Allard},~\bfnm{Denis}\binits{D.}},
  \bauthor{\bsnm{Monbet},~\bfnm{Val{\'e}rie}\binits{V.}} \AND
  \bauthor{\bsnm{Naveau},~\bfnm{Philippe}\binits{P.}}
(\byear{2015}).
\btitle{Stochastic weather generators: an overview of weather type models}.
\bjournal{Journal de la Soci{\'e}t{\'e} Fran{\c{c}}aise de Statistique}
\bvolume{156}
\bpages{101--113}.
\end{barticle}
\endbibitem

\bibitem[\protect\citeauthoryear{Baum et~al.}{1970}]{baum1970}
\begin{barticle}[author]
\bauthor{\bsnm{Baum},~\bfnm{Leonard~E}\binits{L.~E.}},
  \bauthor{\bsnm{Petrie},~\bfnm{Ted}\binits{T.}},
  \bauthor{\bsnm{Soules},~\bfnm{George}\binits{G.}} \AND
  \bauthor{\bsnm{Weiss},~\bfnm{Norman}\binits{N.}}
(\byear{1970}).
\btitle{A maximization technique occurring in the statistical analysis of
  probabilistic functions of {M}arkov chains}.
\bjournal{The annals of mathematical statistics}
\bvolume{41}
\bpages{164--171}.
\end{barticle}
\endbibitem

\bibitem[\protect\citeauthoryear{Bellone, Hughes and
  Guttorp}{2000}]{bellone2000}
\begin{barticle}[author]
\bauthor{\bsnm{Bellone},~\bfnm{Enrica}\binits{E.}},
  \bauthor{\bsnm{Hughes},~\bfnm{James~P}\binits{J.~P.}} \AND
  \bauthor{\bsnm{Guttorp},~\bfnm{Peter}\binits{P.}}
(\byear{2000}).
\btitle{A hidden {M}arkov model for downscaling synoptic atmospheric patterns
  to precipitation amounts}.
\bjournal{Climate research}
\bvolume{15}
\bpages{1--12}.
\end{barticle}
\endbibitem

\bibitem[\protect\citeauthoryear{Biernacki, Celeux and
  Govaert}{2003}]{biernacki2003}
\begin{barticle}[author]
\bauthor{\bsnm{Biernacki},~\bfnm{Christophe}\binits{C.}},
  \bauthor{\bsnm{Celeux},~\bfnm{Gilles}\binits{G.}} \AND
  \bauthor{\bsnm{Govaert},~\bfnm{G{\'e}rard}\binits{G.}}
(\byear{2003}).
\btitle{Choosing starting values for the EM algorithm for getting the highest
  likelihood in multivariate Gaussian mixture models}.
\bjournal{Computational Statistics \& Data Analysis}
\bvolume{41}
\bpages{561--575}.
\end{barticle}
\endbibitem

\bibitem[\protect\citeauthoryear{Broniatowski, Celeux and
  Diebolt}{1983}]{broniatowski1983}
\begin{barticle}[author]
\bauthor{\bsnm{Broniatowski},~\bfnm{M}\binits{M.}},
  \bauthor{\bsnm{Celeux},~\bfnm{G}\binits{G.}} \AND
  \bauthor{\bsnm{Diebolt},~\bfnm{J}\binits{J.}}
(\byear{1983}).
\btitle{Reconnaissance de m{\'e}langes de densit{\'e}s par un algorithme
  d'apprentissage probabiliste}.
\bjournal{Data analysis and informatics}
\bvolume{3}
\bpages{359--373}.
\end{barticle}
\endbibitem

\bibitem[\protect\citeauthoryear{Capp{\'e}, Moulines and
  Ryd{\'e}n}{2009}]{cappe2009}
\begin{binproceedings}[author]
\bauthor{\bsnm{Capp{\'e}},~\bfnm{Olivier}\binits{O.}},
  \bauthor{\bsnm{Moulines},~\bfnm{Eric}\binits{E.}} \AND
  \bauthor{\bsnm{Ryd{\'e}n},~\bfnm{Tobias}\binits{T.}}
(\byear{2009}).
\btitle{Inference in hidden {M}arkov models}.
In \bbooktitle{Proceedings of EUSFLAT Conference}
\bpages{14--16}.
\end{binproceedings}
\endbibitem

\bibitem[\protect\citeauthoryear{Celeux and Durand}{2008}]{celeux2008}
\begin{barticle}[author]
\bauthor{\bsnm{Celeux},~\bfnm{Gilles}\binits{G.}} \AND
  \bauthor{\bsnm{Durand},~\bfnm{Jean-Baptiste}\binits{J.-B.}}
(\byear{2008}).
\btitle{Selecting hidden {M}arkov model state number with cross-validated
  likelihood}.
\bjournal{Computational Statistics}
\bvolume{23}
\bpages{541--564}.
\end{barticle}
\endbibitem

\bibitem[\protect\citeauthoryear{De~Castro, Gassiat and
  Lacour}{2016}]{DCGL2016}
\begin{barticle}[author]
\bauthor{\bsnm{De~Castro},~\bfnm{Yohann}\binits{Y.}},
  \bauthor{\bsnm{Gassiat},~\bfnm{{\'E}lisabeth}\binits{{\'E}.}} \AND
  \bauthor{\bsnm{Lacour},~\bfnm{Claire}\binits{C.}}
(\byear{2016}).
\btitle{Minimax adaptive estimation of nonparametric hidden {M}arkov models}.
\bjournal{Journal of Machine Learning Research}
\bvolume{17}
\bpages{1--43}.
\end{barticle}
\endbibitem

\bibitem[\protect\citeauthoryear{De~Castro, Gassiat and
  Le~Corff}{2017}]{DCGLC2017}
\begin{barticle}[author]
\bauthor{\bsnm{De~Castro},~\bfnm{Yohann}\binits{Y.}},
  \bauthor{\bsnm{Gassiat},~\bfnm{{\'E}lisabeth}\binits{{\'E}.}} \AND
  \bauthor{\bsnm{Le~Corff},~\bfnm{Sylvain}\binits{S.}}
(\byear{2017}).
\btitle{Consistent estimation of the filtering and marginal smoothing
  distributions in nonparametric hidden {M}arkov models}.
\bjournal{IEEE Transactions on Information Theory}
\bvolume{PP}
\bpages{1-1}.
\bdoi{10.1109/TIT.2017.2696959}
\end{barticle}
\endbibitem

\bibitem[\protect\citeauthoryear{Dempster, Laird and Rubin}{1977}]{dempster77}
\begin{barticle}[author]
\bauthor{\bsnm{Dempster},~\bfnm{Arthur~P}\binits{A.~P.}},
  \bauthor{\bsnm{Laird},~\bfnm{Nan~M}\binits{N.~M.}} \AND
  \bauthor{\bsnm{Rubin},~\bfnm{Donald~B}\binits{D.~B.}}
(\byear{1977}).
\btitle{Maximum likelihood from incomplete data via the {EM} algorithm}.
\bjournal{Journal of the {R}oyal {S}tatistical {S}ociety. Series B
  (methodological)}
\bpages{1--38}.
\end{barticle}
\endbibitem

\bibitem[\protect\citeauthoryear{Douc, Moulines and Stoffer}{2014}]{douc2014}
\begin{bbook}[author]
\bauthor{\bsnm{Douc},~\bfnm{Randal}\binits{R.}},
  \bauthor{\bsnm{Moulines},~\bfnm{Eric}\binits{E.}} \AND
  \bauthor{\bsnm{Stoffer},~\bfnm{David}\binits{D.}}
(\byear{2014}).
\btitle{Nonlinear time series: theory, methods and applications with R
  examples}.
\bpublisher{CRC Press}.
\end{bbook}
\endbibitem

\bibitem[\protect\citeauthoryear{Hsu, Kakade and Zhang}{2012}]{hsu2012}
\begin{barticle}[author]
\bauthor{\bsnm{Hsu},~\bfnm{Daniel}\binits{D.}},
  \bauthor{\bsnm{Kakade},~\bfnm{Sham~M}\binits{S.~M.}} \AND
  \bauthor{\bsnm{Zhang},~\bfnm{Tong}\binits{T.}}
(\byear{2012}).
\btitle{A spectral algorithm for learning hidden {M}arkov models}.
\bjournal{Journal of Computer and System Sciences}
\bvolume{78}
\bpages{1460--1480}.
\end{barticle}
\endbibitem

\bibitem[\protect\citeauthoryear{Katz}{1977}]{katz1977}
\begin{barticle}[author]
\bauthor{\bsnm{Katz},~\bfnm{Richard~W}\binits{R.~W.}}
(\byear{1977}).
\btitle{Precipitation as a chain-dependent process}.
\bjournal{Journal of Applied Meteorology}
\bvolume{16}
\bpages{671--676}.
\end{barticle}
\endbibitem

\bibitem[\protect\citeauthoryear{Katz}{1996}]{katz1996}
\begin{barticle}[author]
\bauthor{\bsnm{Katz},~\bfnm{Richard~W}\binits{R.~W.}}
(\byear{1996}).
\btitle{Use of conditional stochastic models to generate climate change
  scenarios}.
\bjournal{Climatic Change}
\bvolume{32}
\bpages{237--255}.
\end{barticle}
\endbibitem

\bibitem[\protect\citeauthoryear{Katz and Parlange}{1998}]{Katz1998}
\begin{barticle}[author]
\bauthor{\bsnm{Katz},~\bfnm{Richard~W}\binits{R.~W.}} \AND
  \bauthor{\bsnm{Parlange},~\bfnm{Marc~B}\binits{M.~B.}}
(\byear{1998}).
\btitle{Overdispersion phenomenon in stochastic modeling of precipitation}.
\bjournal{Journal of Climate}
\bvolume{11}
\bpages{591--601}.
\end{barticle}
\endbibitem

\bibitem[\protect\citeauthoryear{Kenabatho et~al.}{2012}]{kenabatho2012}
\begin{barticle}[author]
\bauthor{\bsnm{Kenabatho},~\bfnm{PK}\binits{P.}},
  \bauthor{\bsnm{McIntyre},~\bfnm{NR}\binits{N.}},
  \bauthor{\bsnm{Chandler},~\bfnm{RE}\binits{R.}} \AND
  \bauthor{\bsnm{Wheater},~\bfnm{HS}\binits{H.}}
(\byear{2012}).
\btitle{Stochastic simulation of rainfall in the semi-arid {L}impopo basin,
  {B}otswana}.
\bjournal{International Journal of Climatology}
\bvolume{32}
\bpages{1113--1127}.
\end{barticle}
\endbibitem

\bibitem[\protect\citeauthoryear{Lambert, Whiting and
  Metcalfe}{2003}]{lambert2003}
\begin{barticle}[author]
\bauthor{\bsnm{Lambert},~\bfnm{Martin~F}\binits{M.~F.}},
  \bauthor{\bsnm{Whiting},~\bfnm{Julian~P}\binits{J.~P.}} \AND
  \bauthor{\bsnm{Metcalfe},~\bfnm{Andrew~V}\binits{A.~V.}}
(\byear{2003}).
\btitle{A non-parametric hidden {M}arkov model for climate state
  identification}.
\bjournal{Hydrology and Earth System Sciences Discussions}
\bvolume{7}
\bpages{652--667}.
\end{barticle}
\endbibitem

\bibitem[\protect\citeauthoryear{Leh{\'e}ricy}{2016}]{lehericy2016order}
\begin{barticle}[author]
\bauthor{\bsnm{Leh{\'e}ricy},~\bfnm{Luc}\binits{L.}}
(\byear{2016}).
\btitle{Consistent order estimation for nonparametric hidden {M}arkov models}.
\bjournal{arXiv preprint arXiv:1606.00622}.
\end{barticle}
\endbibitem

\bibitem[\protect\citeauthoryear{Leh\'ericy}{2017}]{lehericy2017}
\begin{bunpublished}[author]
\bauthor{\bsnm{Leh\'ericy},~\bfnm{Luc}\binits{L.}}
(\byear{2017}).
\btitle{Consistent order estimation and minimax adaptive parameter estimation
  using penalized likelihood for nonparametric hidden {M}arkov models}.
\bnote{Personal manuscript}.
\end{bunpublished}
\endbibitem

\bibitem[\protect\citeauthoryear{Lennartsson, Baxevani and
  Chen}{2008}]{lennartsson2008}
\begin{barticle}[author]
\bauthor{\bsnm{Lennartsson},~\bfnm{Jan}\binits{J.}},
  \bauthor{\bsnm{Baxevani},~\bfnm{Anastassia}\binits{A.}} \AND
  \bauthor{\bsnm{Chen},~\bfnm{Deliang}\binits{D.}}
(\byear{2008}).
\btitle{Modelling precipitation in Sweden using multiple step {M}arkov chains
  and a composite model}.
\bjournal{Journal of hydrology}
\bvolume{363}
\bpages{42--59}.
\end{barticle}
\endbibitem

\bibitem[\protect\citeauthoryear{Pohle et~al.}{2017}]{pohle2017}
\begin{barticle}[author]
\bauthor{\bsnm{Pohle},~\bfnm{Jennifer}\binits{J.}},
  \bauthor{\bsnm{Langrock},~\bfnm{Roland}\binits{R.}}, \bauthor{\bparticle{van}
  \bsnm{Beest},~\bfnm{Floris}\binits{F.}} \AND
  \bauthor{\bsnm{Schmidt},~\bfnm{Niels~Martin}\binits{N.~M.}}
(\byear{2017}).
\btitle{Selecting the Number of States in Hidden {M}arkov Models-Pitfalls,
  Practical Challenges and Pragmatic Solutions}.
\bjournal{arXiv preprint arXiv:1701.08673}.
\end{barticle}
\endbibitem

\bibitem[\protect\citeauthoryear{Rabiner and Juang}{1986}]{rabiner86}
\begin{barticle}[author]
\bauthor{\bsnm{Rabiner},~\bfnm{Lawrence}\binits{L.}} \AND
  \bauthor{\bsnm{Juang},~\bfnm{B}\binits{B.}}
(\byear{1986}).
\btitle{An introduction to hidden {M}arkov models}.
\bjournal{ieee assp magazine}
\bvolume{3}
\bpages{4--16}.
\end{barticle}
\endbibitem

\bibitem[\protect\citeauthoryear{Richardson}{1981}]{richardson81}
\begin{barticle}[author]
\bauthor{\bsnm{Richardson},~\bfnm{Clarence~W}\binits{C.~W.}}
(\byear{1981}).
\btitle{Stochastic simulation of daily precipitation, temperature, and solar
  radiation}.
\bjournal{Water resources research}
\bvolume{17}
\bpages{182--190}.
\end{barticle}
\endbibitem

\bibitem[\protect\citeauthoryear{Schwarz et~al.}{1978}]{schwarz1978}
\begin{barticle}[author]
\bauthor{\bsnm{Schwarz},~\bfnm{Gideon}\binits{G.}} \betal{et~al.}
(\byear{1978}).
\btitle{Estimating the dimension of a model}.
\bjournal{The annals of statistics}
\bvolume{6}
\bpages{461--464}.
\end{barticle}
\endbibitem

\bibitem[\protect\citeauthoryear{Viterbi}{1967}]{viterbi1967}
\begin{barticle}[author]
\bauthor{\bsnm{Viterbi},~\bfnm{Andrew}\binits{A.}}
(\byear{1967}).
\btitle{Error bounds for convolutional codes and an asymptotically optimum
  decoding algorithm}.
\bjournal{IEEE transactions on Information Theory}
\bvolume{13}
\bpages{260--269}.
\end{barticle}
\endbibitem

\bibitem[\protect\citeauthoryear{Wilks}{1998}]{wilks98}
\begin{barticle}[author]
\bauthor{\bsnm{Wilks},~\bfnm{DS}\binits{D.}}
(\byear{1998}).
\btitle{Multisite generalization of a daily stochastic precipitation generation
  model}.
\bjournal{Journal of Hydrology}
\bvolume{210}
\bpages{178--191}.
\end{barticle}
\endbibitem

\bibitem[\protect\citeauthoryear{Wu}{1983}]{wu83}
\begin{barticle}[author]
\bauthor{\bsnm{Wu},~\bfnm{CF~Jeff}\binits{C.~J.}}
(\byear{1983}).
\btitle{On the convergence properties of the {EM} algorithm}.
\bjournal{The Annals of statistics}
\bpages{95--103}.
\end{barticle}
\endbibitem

\bibitem[\protect\citeauthoryear{Yaoming, Qiang and
  Deliang}{2004}]{yaoming2004}
\begin{barticle}[author]
\bauthor{\bsnm{Yaoming},~\bfnm{Liao}\binits{L.}},
  \bauthor{\bsnm{Qiang},~\bfnm{Zhang}\binits{Z.}} \AND
  \bauthor{\bsnm{Deliang},~\bfnm{Chen}\binits{C.}}
(\byear{2004}).
\btitle{Stochastic modeling of daily precipitation in China}.
\bjournal{Journal of Geographical Sciences}
\bvolume{14}
\bpages{417--426}.
\end{barticle}
\endbibitem

\bibitem[\protect\citeauthoryear{Zucchini and Guttorp}{1991}]{Zucchini91}
\begin{barticle}[author]
\bauthor{\bsnm{Zucchini},~\bfnm{Walter}\binits{W.}} \AND
  \bauthor{\bsnm{Guttorp},~\bfnm{Peter}\binits{P.}}
(\byear{1991}).
\btitle{A Hidden {M}arkov Model for Space-Time Precipitation}.
\bjournal{Water Resources Research}
\bvolume{27}
\bpages{1917--1923}.
\end{barticle}
\endbibitem

\end{thebibliography}

\end{document}